\renewcommand{\COMMENT}[2][.55\linewidth]{%
  \leavevmode\hfill\makebox[#1][l]{//~#2}}
\algnewcommand\algorithmicto{\textbf{to}}
\algnewcommand\RETURN{\State \textbf{return} }
\newtheorem{theorem}{Theorem}
\newtheorem{lemma}{Lemma}
\theoremstyle{remark} \newtheorem{remark}[theorem]{Remark}
\theoremstyle{definition} 
\title{Shifted equivalent sources and FFT acceleration for periodic scattering problems, including Wood anomalies}
\author{Oscar P. Bruno\footnote{Computing and Mathematical Sciences,
    Caltech, Pasadena, CA 91125, USA} \and Mart\'in Maas\footnote{University of Buenos Aires and CONICET, Argentina}}
\date{}
\begin{document}
\maketitle

\begin{abstract}
  This paper introduces a fast algorithm, applicable throughout the
  electromagnetic spectrum, for the numerical solution of problems of
  scattering by periodic surfaces in two-dimensional space. The
  proposed algorithm remains highly accurate and efficient for
  challenging configurations including randomly rough surfaces, deep
  corrugations, large periods, near grazing incidences, and,
  importantly, Wood-anomaly resonant frequencies. The proposed
  approach is based on use of certain ``shifted equivalent sources''
  which enable FFT acceleration of a Wood-anomaly-capable
  quasi-periodic Green function introduced recently (Bruno and
  Delourme, Jour. Computat. Phys., 262--290, 2014). The Green-function
  strategy additionally incorporates an exponentially convergent
  shifted version of the classical {\em spectral} series for the Green
  function. While the computing-cost asymptotics depend on the
  asymptotic configuration assumed, the computing costs rise at most
  linearly with the size of the problem for a number of important
  rough-surface cases we consider. In practice, single-core runs in
  computing times ranging from a fraction of a second to a few seconds
  suffice for the proposed algorithm to produce highly-accurate
  solutions in some of the most challenging contexts arising in
  applications. 
\end{abstract}



\section{Introduction\label{sec_intro}}

The problem of scattering by rough surfaces has received considerable
attention over the last few decades in view of its significant
importance from scientific and engineering viewpoints. Unfortunately,
however, the numerical solution of such problems has generally
remained quite challenging. For example, the evaluation of
rough-surface scattering at grazing angles has continued to pose
severe difficulties, as do high-frequency problems including deep
corrugations and/or large periods, and problems at certain
``Wood-anomaly'' frequencies. (As mentioned in Remarks~\ref{wood_1}
and~\ref{wood2d} below, at Wood frequencies the classical
quasi-periodic Green Function ceases to exist, and associated
Green-function summation methods such
as~\cite{ewald_linton1998,ewald_capolino2005,ewald_arens2011} become
inapplicable.)  In spite of significant progress in the general area
of scattering by periodic
surfaces~\cite{desanto98,arens_desanto2006,BrunoHaslamJOSA,Barnett2d,Barnett3d,BSTV2,BrunoDelourme},
methodologies which effectively address the various aforementioned 
difficulties for realistic configurations have remained elusive. The
present contribution proposes a new fast and accurate
integral-equation methodology which addresses these challenges in the
two-dimensional case. The method proceeds by introducing the notion of
``shifted equivalent sources'', which extends the applicability of the
FFT-based acceleration approach~\cite{BrunoKunyansky} to the context
of the Wood-anomaly capable two- and three-dimensional shifted Green
functions~\cite{BrunoDelourme,BSTV2,brunoLado}.  In the present
two-dimensional case, single-core runs in computing times ranging from
a fraction of a second to a few seconds suffice for the proposed
algorithm to produce highly-accurate solutions in some of the most
challenging contexts arising in applications---even at grazing angles
and Wood frequencies. The algorithm is additionally demonstrated for
certain extreme geometries featuring several hundred wavelengths in
period and/or depth, for which accurate solutions are obtained in
single-core runs of the order of a few minutes.

The Wood-anomaly problem has historically presented significant
difficulties. In fact, the well-known challenges that arise as
incidences approach
grazing~\cite{chenwest1995,johnson1998grazing,saillard2011} are
closely related to the appearance of associated near Wood anomalies
(Section~\ref{sec_gaussian}). As indicated in the present paper's
Remark~\ref{wood3d}, further, Wood anomalies are specially pervasive
in three-dimensional configurations, and they have therefore
significantly curtailed solution of periodic scattering problems in
that higher-dimensional context. The extension~\cite{BSTV2} of the
shifted Green function approach to three-dimensions gave rise, for the
first time, to solvers which are applicable to doubly periodic
scattering problems under Wood-frequencies in three-dimensional
space. (An alternative approach to the Wood anomaly problem for
two-dimensions was introduced in~\cite{Barnett2d}, but the
three-dimensional, bi-periodic version~\cite{Barnett3d} of that
approach is restricted to frequencies away from Wood anomalies.)  The
contribution~\cite{BSTV2} does not include an acceleration procedure,
and it can therefore prove exceedingly expensive---except when applied
to relatively simple configurations. The present paper introduces, in
the two-dimensional context, an accelerated version of the shifted
Green-function approach.  An extension of this methodology to the
three-dimensional case, which will be presented elsewhere, has been
found equally effective.


With reference to the nomenclature and concepts introduced
in~\cite{BrunoKunyansky}, in the proposed approach, a ``small'' number
of {\em free-space} equivalent-source densities are initially
computed.  Subsequent convolution of those sources with the {\em
  shifted quasi-periodic Green function}~\cite{BrunoDelourme,BSTV2,brunoLado}
produces, after necessary near-field corrections, the desired
quasi-periodic fields.  Importantly, the near-field corrections needed
in the present context are designed to account for near-field sources
inherent in the shifting strategy (Section~\ref{accel}). Additionally,
the proposed approach requires evaluation of a significantly reduced
number of quasi-periodic Green function values, as low as $O(N)$,
depending on the acceleration setup, instead of the $\mathcal{O}(N^2)$
that are generally required---thus providing highly significant
additional acceleration. The Green-function strategy is supplemented,
finally, by an exponentially convergent shifted version of the
classical spectral series for the Green function, that is used for
large portions of the Cartesian acceleration grid. Use of specialized
high order Nystr\"om quadrature rules, together with the iterative
linear algebra solver GMRES~\cite{GMRES}, complete the proposed
methodology.

This paper is organized as follows: after a few preliminaries are laid
down in Section~\ref{sec:scattering_prb}, Section~\ref{unacc}
describes the shifted Green function
method~\cite{BrunoDelourme,BSTV2,brunoLado}, and it introduces a
hybrid spatial-spectral strategy for the efficient evaluation of the
shifted Green function itself. Our high order quadrature rules and
their use of the hybrid evaluation strategy are put forth in
Section~\ref{formul}.  Section~\ref{accel} then introduces the central
concepts of this paper, namely, the shifted equivalent source method
and the associated FFT acceleration approach;
Section~\ref{overall_fast_solv} presents an algorithmic description of
the overall accelerated solution method. Section~\ref{numer}
demonstrates the new solver by means of a wide variety of
applications. After a few concluding remarks presented in
Section~\ref{concl}, theoretical questions concerning the convergence
of the proposed algorithm are taken up briefly in
Appendix~\ref{app_chap_2}.

\section{Preliminaries}
\label{sec:scattering_prb}
We consider the problem of scattering of a transverse electric
incident electromagnetic wave of the form $u^{inc}(x,y)=e^{\mathrm{i}
  (\alpha x - \beta y)}$ by a perfectly conducting periodic surface
$\Gamma = \{ \left(x,f(x) \right), x \in \mathbb{R} \}$ in
two-dimensional space, where $f$ is a smooth periodic function of
period $d$: $f(x+d) = f(x)$; the transverse magnetic case can be
treated analogously~\cite{BrunoDelourme}. Letting $\mathrm{k}^2 =
\alpha^2+\beta^2$, the scattered field $u^s$ satisfies
\begin{equation}\label{fist_eq}
\left\lbrace
\begin{array}{rlc}
  \Delta u^s + \mathrm{k}^2 u^s & = 0 &  \quad\mbox{in}\quad\Omega^{+}_f \\
  u^s & = -u^{inc} &\quad \mbox{in}\quad\Gamma,
\end{array}
\right. 
\end{equation}
where $\Omega^{+}_f = \{(x,y): y>f(x) \}$.  The incidence angle
$\theta \in (-\frac{\pi}{2},\frac{\pi}{2})$ is defined by $\alpha = \mathrm{k}
\sin(\theta)$ and $\beta = \mathrm{k} \cos(\theta)$. As is
known~\cite{Maystre}, the scattered field $u^s$ is quasi-periodic
($u^s(x+d,y) = u^s(x,y) e^{\mathrm{i} \alpha d}$) and, for all $(x,y)$ such
that $y>\max_{x\in \mathbf{r}} f(x)$, it can be expressed in terms of
a Rayleigh expansion of the form
\begin{equation} \label{eq:rayexp}
 u^{s}(x,y)= \sum_{n=-\infty}^{\infty}B_n e^{\mathrm{i}\alpha_n x+\mathrm{i}\beta_n y}.
\end{equation}
Here, $B_n\in\mathbb{C}$ are the Rayleigh coefficients, and, letting
$U$ denote the finite set of integers $n$ such that $\mathrm{k}^2 -
\alpha_n^{2} > 0$, the wavenumbers $(\alpha_n,\beta_n)$ are given by
\begin{equation}\label{alpha_beta_def}
\alpha_n := \alpha + n\frac{2\pi}{d}, \hspace{30pt}  \beta_n := \left\lbrace
  \begin{array}{ccc}
    \sqrt{\mathrm{k}^2-\alpha_n^2} & , & n\in U \\
    \mathrm{i}\sqrt{\alpha_n^2-\mathrm{k}^2} & , & n\not\in U, \\
  \end{array} \right.
\end{equation}
where the positive branch of the square root is used.

For $n\in U$, the functions $e^{\mathrm{i}\alpha_n x+\mathrm{i}\beta_n
  y}$ correspond to propagative waves.  A wavenumber $\mathrm{k}$ is
called a \textit{Wood-Rayleigh frequency} (or, for conciseness, a Wood
frequency) if for some $n\in\mathbb{Z}$ we have $\mathrm{k}^2 =
\alpha_n^{2}$, or equivalently, $\beta_n=0$. At a Wood Frequency, the
function $e^{\mathrm{i}\alpha_n x+\mathrm{i}\beta_n y} =
e^{\mathrm{i}\alpha_n x}$ becomes a grazing plane wave---that is to
say, under the present conventions, a wave that propagates parallel to
the $x$-axis. Note that at {\em grazing incidence} ($\theta =
\frac{\pi}{2}$) we have $\alpha = \mathrm{k}$ and, thus, $\beta_0 =
0$---that is, any frequency $\mathrm{k}$ becomes a Wood anomaly at
grazing incidence.

\begin{remark}\label{wood_1}
  The term ``Wood-anomaly'' relates to experimental observations by
  Wood~\cite{Wood1902} and a subsequent mathematical treatment by
  Rayleigh~\cite{Rayleigh1907} concerning conversion of propagative to
  evanescent waves as frequencies or incidence angles are changed. As
  pointed out in~\cite{maystre2012theory}, it would be more
  appropriate to refer to this phenomenon as Wood-Rayleigh anomalies
  and frequencies, but, throughout this paper, we use the Wood anomaly
  nomenclature in keeping with common
  practice~\cite{McPhedran,StewartGallaway,Barnett2d}. A brief
  discussion of historical aspects concerning this terminology can be
  found in~\cite[Remark 2.2]{BrunoDelourme}.
\end{remark}

For $n \in U$, the $n$-th order \textit{efficiency},  which is defined by
$e_n = \frac{\beta_n}{\beta}|B_n|^2$, represents the
fraction of the incident energy that is reflected in the $n$-th propagative mode. In particular, as is well known~\cite{Maystre}, for a perfectly conducting surface the (finitely many) efficiencies $e_n$ satisfy the energy balance criterion: $\sum_{n \in U} e_n = 1$. Since integral equation methods  do not enforce this relation exactly, the numerical ``energy-balance'' error 
\begin{equation}\label{energy_error}
\varepsilon = 1 - \sum_{n \in U} e_n, 
\end{equation}
is commonly used to evaluate the precision of numerical
solutions. When supplemented by checks based on convergence studies as
resolutions are increased, the resulting energy-balance error
criterion can be very useful and reliable.

Calling 
\begin{equation}\label{free_space}
G(X,Y) = \frac{\mathrm{i}}{4}H_0^1(\mathrm{k}\sqrt{X^2+Y^2}),
\end{equation}
the free-space Green function for the Helmholtz operator $\Delta +
\mathrm{k}^2$ (where $H_0^{(1)}$ denotes the first Hankel function of order
zero), the classical quasi-periodic Green function for the
problem~\eqref{fist_eq} is given by
\begin{equation} \label{eq:classic_quasiperiodic} G^q(X,Y) = \sum_{n
    \in \mathbf{Z}} e^{\mathrm{i} \alpha n d} G(X+nd,Y).
\end{equation}
The Green function~\eqref{eq:classic_quasiperiodic} also admits the
Rayleigh representation
\begin{equation} \label{quasi_per_rayleigh}
 G^{\textit{qper}}(X,Y) = \frac{\mathrm{i}}{2d} \sum_{n \in \mathcal{Z}} \frac{e^{\mathrm{i} \alpha_n X + \mathrm{i} \beta_n |Y|} }{\beta_n},
\end{equation}
a suitable modification of which can be exploited, as shown in
Section~\ref{hybrid_spectral}, to significantly accelerate the
Wood-frequency capable shifted Green function introduced in
Section~\ref{unacc}.
\begin{remark}\label{wood2d}
  It is important to note that at Wood frequencies the grazing wave
  $e^{\mathrm{i} \alpha_n X}$ in~\eqref{quasi_per_rayleigh}
  ($\beta_{n}=0$) acquires an infinite
  coefficient. Accordingly~\cite{MBV0}, at Wood frequencies the
  lattice sum \eqref{eq:classic_quasiperiodic} blows up. The shifting
  strategy introduced in Section~\ref{unacc} gives rise to
  quasi-periodic Green functions which do not suffer from these
  difficulties.
\end{remark}
\begin{remark}\label{wood3d}
   Wood frequencies (and, thus also ``near-Wood frequencies'') are
  particularly ubiquitous in the 3D case. Indeed, while in two dimensions the proximity to a Wood frequency configuration 
  is characterized by the closest distance from $\mathrm{k}^2$ to the discrete lattice $\alpha_n^2$, that is, by the quantity
\begin{equation}
 R_\mathrm{wood} = \min_{n \in \mathbf{Z}} \beta_{n} = \min \left\lbrace  \sqrt{ \mathrm{k}^2 - \left( \alpha + 2\pi n/d \right)^2 }: n \in \mathbf{Z} \right\rbrace,
\end{equation}
the corresponding expression for the 3D case (for incidence wavevector $(\alpha_1,\alpha_2,-\beta)$) is given by
\begin{equation}
 R_\mathrm{wood} = \min \left\lbrace \sqrt{ \mathrm{k}^2 - \left( \alpha_1 + 2\pi n/d_1 \right)^2 - \left( \alpha_2 + 2\pi m/d_2 \right)^2 }: (n,m) \in \mathbf{Z}^2 \right\rbrace.
\end{equation}
Therefore, in 3D, near Wood frequencies arise for all points in the
lattice $(\alpha_1,\alpha_2) + \left( \frac{2\pi}{d_1}
  \mathbf{Z}\right) \times \left(\frac{2\pi}{d_2} \mathbf{Z} \right)$
that lie on circles of radii close to $\mathrm{k}$ and are thus quite
numerous for large values of $\mathrm{k}$: for a given arbitrarily
small distance $\varepsilon$, in the 3D case there are $O(\mathrm{k})$
such frequencies within an $\varepsilon$ band around the circle of
radius $\mathrm{k}$ in the plane. For sufficiently small $\varepsilon$
the corresponding number in the 2D case is at most two.
\end{remark}

\section{Shifted Green function\label{unacc}}


As shown in~\cite{BrunoDelourme,BSTV2}, a suitable modification of the
Green function~\eqref{quasi_per_rayleigh} which does not suffer from
the difficulties mentioned in Remark~\ref{wood2d}, and which is
therefore valid throughout the spectrum, can be introduced on the
basis of a certain ``shifting'' procedure related to the method of
images.  In what follows, the construction~\cite{BrunoDelourme} of a
multipolar or ``shifted'' quasi-periodic Green function is reviewed
briefly, and a new hybrid spatial-spectral strategy for its evaluation
is presented.

\subsection{Quasi-periodic multipolar Green functions \label{sec_quasiper_multipolar}}
Rapidly decaying multipolar Green functions $G_j$ of various orders
$j$ can be obtained as linear combinations of the regular free-space
Green function $G$ with
arguments that include a number $j$ of shifts. For
example, we define a multipolar Green function of order $j=1$ by
\begin{equation}
 G_1(X,Y) = G(X,Y) - G(X,Y+h)
\end{equation}
This expression provides a Green function for the Helmholtz equation,
valid in the complement of the shifted-pole set $P_1=\{ (0,-h) \}$, which decays faster than $G$ (with order $|X|^{-\frac 32}$ instead of $|X|^{-\frac 1 2}$) as $X \to \infty$---as there results from a simple
application of the mean value theorem and the asymptotic properties of
Hankel functions~\cite{Lebedev}. 

A suitable generalization of this idea, leading to multipolar Green
functions with arbitrarily fast algebraic decay~\cite{BrunoDelourme},
results from application of the finite-difference operator
$(u_0,\dots,u_j)\to \sum_{\ell=0}^j (-1)^\ell\binom{j}{\ell}u_{\ell}$
($j\in\mathbb{N}$) that, up to a factor of $1/h^j$,
approximates the $j$-th order $Y$-derivative
operator~\cite[eq. 5.42]{Knuth}. For each non-negative integer $j$,
the resulting multipolar Green functions $G_j$ of order $j$ is thus
given by
\begin{equation}\label{shifted_green_function}
  G_j(X,Y) = \sum_{m=0}^j (-1)^m C_m^j  \, G(X,Y+mh), \quad \mbox{where} \quad C_m^j = \binom{j}{m} = \frac{j!}{m! (j-m)!}.
\end{equation}
Clearly, $G_j$ is a Green function for the Helmholtz equation in the
complement of the shifted-pole set
\begin{equation} 
  P_j = \{(X,Y)\in \mathbb{R}^2 : (X,Y) = (0,-m h)
  \mbox{ for some } m\in\mathbb{Z} \mbox{ with } 1\leq m\leq j\}.
\end{equation}
As shown in~\cite{BrunoDelourme}, further, for $Y$ bounded we have
\begin{equation}\label{shifted_green_space}
  G_j(X,Y)\sim |X|^{-q} \mbox{ as } X \to\infty, \quad \mbox{with} \quad q = \frac{1}{2} + \left\lfloor{\frac{j+1}{2}}\right\rfloor,
\end{equation}
where $\left\lfloor{x}\right\rfloor$ denotes the largest integer less than or
equal to $x$.

For sufficiently large values of $j$, the spatial lattice sum
\begin{equation}
\label{shifted_periodic_green_function}
 \tilde G_j^{\textit{qper}}(X,Y) = \sum_{n=-\infty}^{\infty} e^{-\mathrm{i}\alpha n d} G_j(X+nd,Y)
\end{equation}
provides a rapidly (algebraically) convergent quasi-periodic Green function series defined for all $(X,Y)$
outside the periodic shifted-pole lattice
\begin{equation} 
  P_j^{\textit{qper}} = \{(X,Y)\in \mathbb{R}^2 : (X,Y) = (nd,-m h)
  \mbox{ for some }n,m\in\mathbb{Z} \mbox{ with } 1\leq m\leq j\}.
\end{equation}
The Rayleigh expansion of $\tilde G_j^{\textit{qper}}$, further, can be readily
obtained by applying equation~\eqref{quasi_per_rayleigh}; the result
is
\begin{equation}\label{shifted_quasi_per_rayleigh}
 \tilde G_j^{\textit{qper}}(X,Y) = \sum_{n=-\infty}^{\infty} \frac{\mathrm{i}}{2d\beta_n} e^{\mathrm{i}\alpha_n X} \left( \sum_{m=0}^j (-1)^m C_m^j e^{\mathrm{i} \beta_n |Y + m h|} \right) \qquad \mbox{for } Y \not=-mh, \quad 0\le m \le j.
\end{equation}
And, using the identity $\sum_{m=0}^j (-1)^m C_m^j e^{\mathrm{i}
  \beta_n (Y + m h)} = e^{\mathrm{i} \beta_n Y} (1-e^{\mathrm{i} \beta_n h})^j $
there results
%
\begin{equation}\label{shifted_periodic_green_function_spectral}
 \tilde G_j^{\textit{qper}}(X,Y) = \sum_{n=-\infty}^{\infty} \frac{\mathrm{i}}{2d\beta_n} (1-e^{\mathrm{i} \beta_n h})^j e^{\mathrm{i}\alpha_n X + \mathrm{i} \beta_n Y} \qquad \mbox{for } Y > 0.
\end{equation}

As anticipated, no problematic infinities occur in the Rayleigh
expansion of $\tilde G_j^{\textit{qper}}$, even at Wood anomalies
($\beta_n=0$), for any $j\geq 1$. The shifting procedure has thus
resulted in rapidly-convergent spatial representations of various
orders (equations~\eqref{shifted_green_space}
and~\eqref{shifted_periodic_green_function}) as well as spectral
representations which do not contain infinities
(equation~\eqref{shifted_periodic_green_function_spectral}).

An issue does arise from the shifting method which requires attention:
the shifting procedure cancels certain Rayleigh modes for $Y>0$ and
thereby affects the ability of the Green function to represent general
fields.  In detail, the coefficient $(1-e^{\mathrm{i} \beta_n h})^j
\beta_n^{-1}$ in the
series~\eqref{shifted_periodic_green_function_spectral} vanishes if
either $\beta_n = 0 $ (Wood anomaly) and $j\geq 1$, or if $\beta_n h$
equals an integer multiple of $2\pi$.  As in~\cite{BrunoDelourme}, we
address this difficulty by simply adding to $\tilde
G_j^{\textit{qper}}$ the missing modes. In fact, in a numerical
implementation it is beneficial to incorporate corrections containing
not only resonant modes, but also {\em nearly} resonant modes. Thus,
using a sufficiently small number $\eta$ and defining the
$\eta$-dependent completion function
\begin{equation}\label{M_eta}
  M^\eta(X,Y)  = \sum_{n\in U^\eta} e^{\mathrm{i}\alpha_n X + \mathrm{i}\beta_n Y}, \qquad U^\eta =  \left\{  n \in \mathbb{Z}: |(1-e^{\mathrm{i} \beta_n h})^j \beta_n^{-1}| < \eta \right\},
\end{equation}
(where for $\beta_n = 0$ the quotient $|(1-e^{\mathrm{i} \beta_n h})^j
\beta_n^{-1}|$ is interpreted as the corresponding limit as $\beta_n
\to 0$), a {\em complete} version of the shifted Green function is
given by
\begin{equation}\label{complete}
G_j^{\textit{qper}}(X,Y) = \tilde G_j^{\textit{qper}}(X,Y) + M^\eta(X,Y)
\end{equation}
for $(X,Y)$ outside the set $P_j^{\textit{qper}}$. 

\begin{remark}\label{rem_gtilde}
  The following section presents an algorithm which, relying on both
  equations~\eqref{shifted_periodic_green_function}
  and~\eqref{shifted_quasi_per_rayleigh}, rapidly evaluates the Green
  function $\tilde G_j^{\textit{qper}}$.  Section~\ref{formul}
  presents integral equation formulations based on separate use of the
  functions $\tilde G_j^{\textit{qper}}$ and $M^\eta$, that avoids a
  minor difficulty (addressed in~\cite[Remark 4.8]{BrunoDelourme})
  related to the direct use of the Green function
  $G_j^{\textit{qper}}$ defined in~\eqref{complete}.
\end{remark}

\subsection{Hybrid spatial-spectral evaluation of $\tilde
  G_j^{\textit{qper}}$}\label{hybrid_spectral}
Equation~\eqref{shifted_periodic_green_function_spectral} provides a
very useful expression for evaluation of $\tilde G_j^{\textit{qper}}$ for $Y>0$ at
all frequencies, including Wood anomalies---since, for such 
values of $Y$, this series converges exponentially fast. Interestingly,
further, the related expression~\eqref{shifted_quasi_per_rayleigh} can
also be used, again, with exponentially fast convergence, including
Wood anomalies, for all values of $Y$ sufficiently far from the set $\{Y=-m
h: \; 0\leq m\leq j \}$. The latter expression thus provides a greatly
advantageous alternative to direct summation of the
series~\eqref{shifted_periodic_green_function} for a majority (but not
not the totality) of points $(X,Y)$ relevant in a given quasi-periodic
scattering problem.

The exponential convergence of~\eqref{shifted_quasi_per_rayleigh} is clear by inspection. 
To see that~\eqref{shifted_quasi_per_rayleigh} is well defined
at and around Wood anomalies it suffices to substitute the sum in $m$ in equation~\eqref{shifted_quasi_per_rayleigh} by the expression
\begin{equation}\label{betan_taylor}
 \sum_{m=0}^j (-1)^m C_m^j \frac{e^{\mathrm{i} \beta_n |Y + m h|}}{\beta_n} =  e^{\mathrm{i} \beta_n Y} \frac{(1-e^{\mathrm{i} \beta_n h})^j}{\beta_n} - \sum_{\substack{0\le m\le j \\ m < -Y/h}} (-1)^m C_m^j \frac{ e^{\mathrm{i} \beta_n (Y + m h)} - e^{-\mathrm{i} \beta_n (Y + m h)}}{\beta_n}. 
\end{equation}
where, once again, the values of the quotients containing $\beta_n$
denominators at $\beta_n=0$ are interpreted as the corresponding
$\beta_n \to 0$ limits. 

A strategy guiding the selection of the values $Y$ for which the spectral
series~\eqref{shifted_quasi_per_rayleigh} is used instead of the spatial 
series~\eqref{shifted_periodic_green_function} can be devised on the basis of the relation 
\begin{equation}\label{beta_n_estimate}
 \beta_n = \mathrm{k}\sqrt{1-(\sin(\theta)+\frac{\lambda}{d}n)^2} \; \approx \; \mathrm{\mathrm{i}}\mathrm{k} \frac{\lambda}{d}n + \mathcal{O}(1) \; = \; \frac{2n\pi}{d}\mathrm{i} + \mathcal{O}(1).
\end{equation}
Indeed, the estimate 
\begin{equation}\label{spectral_exp_estimate}
 \left| e^{\mathrm{i}\beta_n|Y+m h|} \right| < C e^{- 2 n \pi\frac{\delta}{d}}, \quad (|Y+mh|>\delta>0)
\end{equation}
shows that, for $|Y+mh|>\delta>0$, the spectral representation~\eqref{shifted_quasi_per_rayleigh} 
converges like a geometric series of ratio $e^{- 2 \pi\frac{\delta}{d}} <1$---with fast convergence for values of $\frac{\delta}{d}$ sufficiently far from zero.

\section{Hybrid, high-order Nystr\"om solver throughout the spectrum}\label{formul}
\subsection{Integral equation formulation\label{formulation}}
The Green functions $G_j^{\textit{qper}}$ presented in
Section~\ref{unacc} (equation~\eqref{complete}) can be used to devise
an integral equation formulation for problem~\eqref{fist_eq} which
remains valid at Wood Anomalies~\cite{BrunoDelourme}. As indicated in
Remark~\ref{rem_gtilde}, however, we proceed in a slightly different
manner. Letting $\nu(x')$ denote the normal to the curve $\Gamma$ at
the point $(x',f(x'))$ and $ds'$ denote the element of length on
$\Gamma$ at $(x',f(x'))$, we express the scattered field $u^{scat}$
in~\eqref{fist_eq}, for all $(x,y)\in\Omega^{+}_f$, as a multipolar
double layer potential plus a potential with kernel $M^\eta$:
\begin{equation}
\label{eq:int_eq_dirichlet}
u^{scat}(x,y) = \int_0^d \left( \nu(x')\cdot \nabla_{(x',y')} \tilde G_j^{\textit{qper}}(x-x',y-y')\big|_{y'=f(x')} + M^\eta(x-x',y-f(x')) \right) \mu(x') ds'.
\end{equation}
Defining the normal-derivative operator $\partial_{\nu'}$, whose action on a given function $K:
\mathbb{R} \times \mathbb{R} \to \mathbb{C}$ is given by
\begin{equation}\label{K_def}
\partial_{\nu'} K (x,x')  =  \left[ \nu(x')\cdot \nabla_{(x',y')}K  (x-x',y-y')\right]_{y=f(x),y'=f(x')},
\end{equation}
and, letting $D$ denote the integral operator
\begin{equation}\label{per_doublelayer_op}
 D[\mu](x) = \int_0^d \left( \partial_{\nu'} \tilde G_j^{\textit{qper}}(x,x') + M^\eta(x-x',f(x)-f(x')) \right) \mu(x') ds', \quad x\in[0,d],
\end{equation}
it follows that $\mu$ satisfies the integral equation
\begin{equation}\label{eq_per_doublelayer}
  \frac{1}{2} \mu(x) + D[\mu](x)   = -u^{inc}(x) \quad \mbox{for}\quad x\in[0,d]. 
\end{equation}
We may also write 
\begin{equation}\label{d_tilde_M}
  D[\mu] = \tilde D[\mu] + D_M[\mu]
\end{equation} 
where
\begin{align}
  \tilde D[\mu](x) = & \int_0^d \partial_{\nu'} \tilde G_j^{\textit{qper}}(x,x') \mu(x') ds' \quad \mbox{and} \label{per_doublelayer_2} \\
  D_M[\mu](x) = & \int_0^d M^\eta(x-x',f(x)-f(x'))\mu(x') ds'. \label{per_doublelayer_DM}
\end{align}
It is easy to check~\cite{BrunoDelourme}, finally, that the operator
$\tilde D$ can be expressed as the infinite integral
\begin{equation}\label{per_doublelayer_op_2}
 \tilde D[\mu](x) = \int_{-\infty}^{+\infty} \partial_{\nu'} G_j(x,x') \mu(x') ds_\Gamma(x'),
\end{equation}
where $\mu$ is extended to all of $\mathbb{R}$ by
$\alpha$-quasi-periodicity:
\begin{equation}\label{mu_quasiper}
\mu(x+d)=\mu(x)e^{\mathrm{i}\alpha d}.
\end{equation}

The proposed fast iterative Nystr\"om solver for
equation~\eqref{eq_per_doublelayer} is based on use of an equispaced
discretization of the periodicity interval $[0,d]$, an associated
quadrature rule, and an FFT-based acceleration method. The underlying
high-order quadrature rule, which is closely related to the one used
in~\cite[Sect. 5]{BrunoDelourme}, but which incorporates a
highly-efficient hybrid spatial-spectral approach for the evaluation
of the Green function, is detailed in Section~\ref{quadrature}. On the
basis of this quadrature rule alone, an unaccelerated Nystr\"om solver
for equation~\eqref{eq_per_doublelayer} is presented in
Section~\ref{Overall}; a discussion concerning the convergence of this
algorithm is put forth in Appendix~\ref{app_chap_2}.  The proposed
acceleration technique and resulting overall accelerated solver are
presented in Section~\ref{accel}.

\subsection{High-order quadrature for the incomplete operator
  $\tilde{D}$ \label{quadrature}}

In the proposed Nystr\"om approach, the smooth windowing function 
\begin{equation}
\label{S_def}
S_{\gamma,a}(x) = 
\left\{
  \begin{array}{ccc}
  1 	& \text{if } |x| \le \gamma, & \\
  \exp \left(\frac{2e^{-1/u}}{u-1}\right) & \text{if } \gamma < |x| < a, & u = \frac{|x|-\gamma}{a-\gamma}, \\
  0 	& \text{if } |x| \ge a, & \\
  \end{array} 
\right.  
\end{equation}
(see Figure~\ref{POU}) is used to decompose the operator $\tilde{D}$ in
equation~\eqref{per_doublelayer_op_2} as a sum $\tilde{D} = \tilde D_{\mathrm{reg}} + \tilde D_{\mathrm{sing}}$
of \textit{regular} and \textit{singular} contributions $\tilde D_{\mathrm{reg}}$ and $\tilde D_{\mathrm{sing}}$, given by
\begin{equation}
 \tilde D_{\mathrm{reg}} [\mu](x) = \int_{-\infty}^{+\infty} \partial_{\nu'} G_j(x,x') (1-S_{\gamma,a}^f(x,x')) \mu(x') ds'
\end{equation}
and
\begin{equation}\label{dsing_def}
 \tilde D_{\mathrm{sing}}[\mu](x) = \int_{x-a}^{x+a} \partial_{\nu'} G_j(x,x') S_{\gamma,a}^f(x,x') \mu(x')  ds'.
\end{equation}
where we have defined
\begin{equation}
S_{\gamma,a}^f(x,x') = S_{\gamma,a}\left(\sqrt{(x-x')^2 + (f(x)-f(x'))^2}\right).
\end{equation}

\begin{remark}\label{selec_a}
  The parameter $a$ is selected so as to appropriately isolate the
  logarithmic singularity. For definiteness, throughout this chapter it
  is assumed the relation $a<d$ is satisfied.
\end{remark}

\begin{figure}[ht!]
 \centering
 \includegraphics[scale=0.5]{./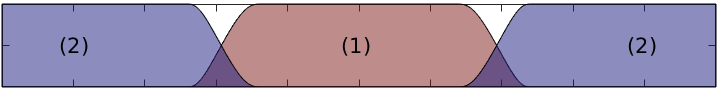}
 \caption{Partition of Unity functions
   $S_{\gamma,a}(x)$ and
   $1-S_{\gamma,a}(x)$, labeled (1) and (2),
   respectively.}
\label{POU}
\end{figure}

To derive quadrature rules for the operators $\tilde D_{\mathrm{reg}}$
and $\tilde D_{\mathrm{sing}}$ we consider an equispaced
discretization mesh $\{ x_\ell \}_{\ell=-\infty}^\infty$, of mesh-size
$\Delta x = (x_{\ell+1} - x_\ell)$, for the complete real line, which
is additionally assumed to satisfy $x_0=0$ and $x_{N}=d$ for a certain
integer $N>0$. The corresponding numerical approximations of the
values $\mu(x_\ell)$ ($1\leq \ell\leq N$) will be denoted by
$\mu_\ell$; in view of~\eqref{mu_quasiper} the quantities $\mu_\ell$
are extended to all $\ell \in \mathbb{Z}$ by quasi-periodicity:
\begin{equation}\label{quasiper_disc}
  \mu_{(\ell +p N) } = \mu_{\ell}\ e^{\mathrm{i}\alpha pd}\qquad \ell = 1,\dots, N\quad p\in\mathbb{Z}.
\end{equation}

\subsubsection{Discretization of the operator $\tilde D_{\mathrm{sing}}$\label{sing_op}}

To discretize the operator $\tilde D_{\mathrm{sing}}$ we employ the
Martensen-Kussmaul (MK) splitting~\cite{ColtonKress_InverseAcoustic}
of the Hankel function $H_1^1$ into logarithmic and smooth
contributions. Following~\cite[Secs. 5.1-5.2]{BrunoDelourme} we thus
obtain the decomposition
\begin{equation}\label{descomp}
 \partial_{\nu'} G_j(x,x') = K_s(x,x')\ln\left[4\sin^2\left(\frac{\pi}{a}(x-x') \right)\right] + K_r(x,x')
\end{equation}
where the smooth kernels $K_s$ and $K_r$ are given by
\begin{equation}\label{ks_def}
 K_s(x,x') = \frac{\mathrm{k}}{4\pi} \frac{f(x')(x-x') - (f(x')-f(x))}{\sqrt{(x-x')^2+(f(x)-f(x'))^2}} J_1(\mathrm{k}\sqrt{(x-x')^2+(f(x)-f(x'))^2})
\end{equation}
and 
\begin{equation}\label{kr_def}
  K_r(x,x') = \partial_{\nu'} G_j(x,x') - K_s(x,x')\ln\left[4\sin^2\left(\frac{\pi}{a}(x-x') \right)\right].
\end{equation}
Replacing~\eqref{descomp} into~\eqref{dsing_def} we obtain
$\tilde D_{\mathrm{sing}} =
\tilde D_{\mathrm{sing}}^\mathrm{log}+\tilde D_{\mathrm{sing}}^\mathrm{trap}$ where
\begin{align}\label{local_twointeg}
  \tilde D_{\mathrm{sing}}^\mathrm{log} &=   \int_{x-a}^{x+a} \ln\left[4\sin^2\left(\frac{\pi}{a}(x-x') \right)\right] K_s(x,x') S_{\gamma,a}^f(x,x') \mu(x')  ds' \quad\mbox{and}\\
  \tilde D_{\mathrm{sing}}^\mathrm{trap} &= \int_{x-a}^{x+a} K_r(x,x')
  S_{\gamma,a}^f(x,x') \mu(x') ds'.
\end{align}
The operator $\tilde D_{\mathrm{sing}}^\mathrm{log}$ contains the
logarithmic singularity; the operator $\tilde
D_{\mathrm{sing}}^\mathrm{trap}$ on the other hand, may be
approximated accurately by means of the trapezoidal rule. 

Given that $S_{\gamma,a}^f(x,x')$ vanishes smoothly at $x'=x\pm a$
together with all of its derivatives, we can obtain high-order
quadratures for each of these integrals on the basis of the equispaced
discretization $\{ x_\ell \}$ ($\ell \in \mathbb{Z}$) and the Fourier
expansions of the smooth factor $K_s(x,x') S_{\gamma,a}^f(x,x')
\mu(x')$. Indeed, utilizing the aforementioned discrete approximations
$\mu_\ell$ (where $\ell$ may lie outside the range $1\leq \ell \leq
N$), relying on certain explicitly-computable Fourier-based weights
$R_{i\ell} $ (which can be computed for general $a$ by following the
procedure used in~\cite[Sec. 5.2]{BrunoDelourme} for the particular
case in which $a$ equals a half period $d/2$), and appropriately
accounting for certain near-singular terms in the kernel $K_r$ by
Fourier interpolation of $\mu(x')S_{\gamma,a}^f(x,x')$ (as detailed
in~\cite[Sec. 5.3]{BrunoDelourme}), a numerical-quadrature
approximation
\begin{equation}\label{local_mat}
 \tilde D^{\Delta x}_{\mathrm{sing}}[\mu_1,\dots,\mu_{N}](x_i) = \sum_{ \ell\in L_i^a } R_{i\ell} K_s(x_i,x_\ell) S_{\gamma,a}^f(x_i,x_\ell)\mu_\ell + \sum_{ \ell\in L_i^a } W_{i\ell} K_r(x_i,x_\ell) S_{\gamma,a}^f(x_i,x_\ell)\mu_\ell
\end{equation}
of $\tilde D_{\mathrm{sing}}[\mu](x_i)$ is obtained. Here $L_{i}^a :
\{ \ell : |x_\ell-x_i|\le a \}$, and, for values of $\ell > N$ and
$\ell < 1$, $\mu_\ell$ is given by~\eqref{quasiper_disc}.

\subsubsection{Discretization of the operator $\tilde D_{\mathrm{reg}}$ \label{disc_Dreg}}
The windowing function $S_{\gamma,a}$ (with ``relatively small''
values of $a$) was used in the previous section to discriminate
between singular and regular contributions $\tilde D_{\mathrm{sing}}$
and $\tilde D_{\mathrm{reg}}$ to the operator $\tilde D$. A new
windowing function $S_{cA,A}(x-x')$ (intended for use with ``large''
values of $A$) is now introduced to smoothly truncate the infinite
integral that defines the operator $\tilde D_{\mathrm{reg}}$: the
truncated operator is defined by
\begin{equation}\label{na_A}
 \tilde D_{\mathrm{reg}}^A[\mu](x) = \int_{x-A}^{x+A}  \partial_{\nu'} G_j(x,x') (1-S_{\gamma,a}^f(x,x')) \mu(x') S_{cA,A}(x-x') ds'.
\end{equation}
Defining the windowed Green function by
\begin{equation}\label{Gq_limit}
  \tilde  G^{q,A}_j(X,Y) = \sum_{p=-\infty}^{\infty} G_j(X+dp,Y) S_{cA,A}(X+dp),
\end{equation}
the truncated operator $\tilde D_{\mathrm{reg}}^{A}$ can also be
expressed in the form
\begin{equation}
\tilde D_{\mathrm{reg}}^{A}[\mu](x) = \int_{x-d/2}^{x+d/2}  \partial_{\nu'} \tilde G^{q,A}_j(x,x') (1-S_{\gamma,a}^f(x,x')) \mu(x')ds'.
\end{equation}
On account of the smoothness of the integrand in~\eqref{na_A}, and the
fact that it vanishes identically outside $[x-A,x+A]$, the
integral~\eqref{na_A} is approximated with superalgebraic order of
integration accuracy by the discrete trapezoidal rule expression
\begin{equation}\label{DAdx_def}
 \tilde D_{\mathrm{reg}}^{A,\Delta x}[\mu_1,\dots,\mu_N](x_i) = \sum_{\ell=-\infty}^{\infty}  \partial_{\nu'} G_j(x_i,x_\ell) S_{cA,A}(x_i-x_\ell)(1-S_{\gamma,a}^f(x_i,x_\ell)) \mu_\ell (\Delta s)_\ell
\end{equation}
where $(\Delta s)_\ell$ denotes the discrete surface element $\Delta x
\sqrt{1+f(x_\ell)^2}$, and with $\mu_\ell$ replaced by $\mu(x_\ell)$
($\ell = 1,\dots,N$); see also~\eqref{mu_quasiper}
and~\eqref{quasiper_disc}. 
\begin{remark}\label{rem_trapezoidal_Dreg}
  The claimed superalgebraic integration accuracy of the right-hand
  expression in~\eqref{DAdx_def} for a fixed value of $A$ follows from
  the well known trapezoidal-rule integration-accuracy result for
  smooth periodic function integrated over their
  period~\cite{Kress})---since the restriction of the integrand to
  $[x-A,x+A]$ can be extended to all of $\mathbb{R}$ as a smooth and
  periodic function $F^{A,x} = F^{A,x}(x')$ of period $2A$.
\end{remark}

An analysis of the smooth truncation procedure,
namely, of the convergence of $\tilde D_{\mathrm{reg}}^A$ to $\tilde D_{\mathrm{reg}}$ as
$A\to\infty$, is easily established on the basis of the convergence analysis~\cite{BrunoDelourme,BSTV1,BSTV2}
for the windowed-Green-function~\eqref{Gq_limit} to the regular shifted
series~\eqref{shifted_periodic_green_function}
\begin{equation}\label{Gq_limit_2}
 \tilde G^{\textit{qper}}_j(x,y) = \lim_{A\to\infty} \tilde G^{q,A}_j(x,y).
\end{equation}
The overall error resulting from the combined use of smooth truncation
and trapezoidal discretization is discussed in
Section~\ref{app_chap_2}. In particular, Lemma~\ref{lemma_app_1} in
that appendix provides an error estimate that shows that the
superalgebraic order of trapezoidal integration accuracy is also
uniform with respect to $A$.



Clearly, the numerical method embodied in equations~\eqref{local_mat}
and~\eqref{DAdx_def} provides a high-order strategy for the evaluation
of the operator $\tilde D$ in
equation~\eqref{per_doublelayer_op_2}. As shown in the following
section, a hybrid spatial/spectral Green-function evaluation strategy
can be used to significantly decrease the costs associated with
evaluation of the discrete operator in
equation~\eqref{DAdx_def}. While this strategy suffices in many cases,
when used in conjunction with the FFT acceleration method introduced
in Section~\ref{accel} a solver results which, as mentioned in the
introduction, enables treatment of challenging rough-surface
scattering problems.


\subsubsection{Spatial/Spectral hybridization \label{hybridization}}
To obtain a hybrid strategy we express~\eqref{DAdx_def} in terms of
the function $\tilde G^{\textit{qper}}_j$ which we then evaluate by means of
either~\eqref{shifted_quasi_per_rayleigh} or~\eqref{Gq_limit_2},
whichever is preferable for each pair $(x_i,x_\ell)$. Taking limit as
$A\to\infty$ in~\eqref{shifted_quasi_per_rayleigh} we obtain the
limiting discrete operator
\begin{equation} \label{D_ns_inf} \tilde D_{\mathrm{reg}}^{\Delta
    x}[\mu_1,\dots,\mu_N](x_i) = \sum_{\ell=-\infty}^{\infty}
  \mu_\ell (1-S_{\gamma,a}^f(x_i,x_\ell)) \partial_{\nu'}
  G_j(x_i,x_\ell) (\Delta s)_\ell.
\end{equation}
Writing, for every $\ell\in\mathbb{Z}$, $x_\ell = x_k - dp$ for a
unique integers $k$ and $p$ ($1\le k\le N$), exploiting the
periodicity of the function $f$ and the $\alpha$-quasi-periodicity of
$\mu$, using~\eqref{Gq_limit} and \eqref{Gq_limit_2}, and taking into
account Remark~\ref{selec_a}, we obtain the alternative expression
\begin{equation} \label{D_ns} \tilde D_{\mathrm{reg}}^{\Delta
    x}[\mu_1,\dots,\mu_N](x_i) = \sum_{m=1}^{N}
   \partial_{\nu'} \tilde  G^{\textit{qper},\star}_j(x_i,x_k)\mu_k (\Delta s)_k,
\end{equation}
where we have set
\begin{equation}\label{G_jq_s}
 \tilde  G^{\textit{qper},\star}_j(X,Y) = \tilde G^{\textit{qper}}_j(X,Y) - \sum_{p=-1}^1 G_j(X+dp,Y) e^{-\mathrm{i}\alpha dp} S_{\gamma,a}(X+dp).
\end{equation}
Clearly, $\tilde G^{\textit{qper},\star}_j$ is a smooth function that results from
subtraction from $\tilde G^{\textit{qper}}_j(X,Y)$ of (windowed
versions of) the nearest interactions (modulo the period). 

The expression~\eqref{D_ns} relies, via~\eqref{G_jq_s}, on the
evaluation of the exact quasi-periodic Green function $\tilde
G^{\textit{qper}}_j(X,Y)$. For a given point $(X,Y)$ this function can
be evaluated by either a spectral or a spatial approach: use of the
spectral series as described in Section~\ref{hybrid_spectral} is
preferable for values of $Y$ sufficiently far from the set $\{Y=-m h:
\; 0\leq m\leq j \}$, while, in view of the fast
convergence~\cite{BrunoDelourme,BSTV1,BSTV2} of \eqref{Gq_limit_2},
for other values of $Y$ the spatial expansion~\eqref{Gq_limit} with a
sufficiently large value of $A$ can be more advantageous. (Note that
if the grating is deep enough, then $f(x)$ could be far from $f(x')$
even if $x$ is relatively close to $x'$. The exponentially convergent
spectral approach could provide the most efficient alternative in such
cases.)

\subsection{Overall discretization and (unaccelerated) solution 
  of equation~\eqref{eq_per_doublelayer} \label{Overall}}
Taking into account equation~\eqref{d_tilde_M} in conjunction with the
Green function evaluation and discretization strategies presented in
Section~\eqref{quadrature} for the operator $\tilde D$, a full
discretization for the complete operator $D$
in~\eqref{per_doublelayer_op} can now be obtained easily: an efficient
discretization of the remaining operator $D_M$
in~\eqref{per_doublelayer_DM}, whose kernel is given by
equation~\eqref{M_eta}, can be produced via a direct application of
the trapezoidal rule. Separating the variables $X$ and $Y$ in the
exponentials $e^{\mathrm{i}\alpha_n X + \mathrm{i}\beta_n Y}$, further, the resulting
discrete operator may be expressed in the form
\begin{equation}\label{D_M_deltax}
 D_M^{\Delta x}[\mu_1,\dots,\mu_N](x_i) = \sum_{n\in U^\eta} e^{\mathrm{i}\alpha_n x_i} \left( \sum_{\ell=1}^N e^{\mathrm{i}\beta_n f(x_\ell)} \mu_\ell (\Delta s)_\ell \right).
\end{equation}
Letting
\begin{equation}\label{D_deltax}
 D^{\Delta x} = \tilde D^{\Delta x}_{\mathrm{sing}} + \tilde D^{\Delta x}_{\mathrm{reg}} + D^{\Delta x}_M
\end{equation}
we thus obtain the desired discrete version
\begin{equation}\label{discrete_eq}
\left( \frac{1}{2}I +   D^{\Delta x} \right)[\mu_1,\dots,\mu_N](x_i) = -u^{inc}(x_i)
\end{equation}
of equation~\eqref{eq_per_doublelayer}.

As mentioned in Section~\ref{sec_intro}, the proposed method relies on use
of an iterative linear algebra solver such as GMRES~\cite{GMRES}. The
necessary evaluation of the action of the discrete operator $D^{\Delta
  x}$ is accomplished, in the direct (unaccelerated) implementation
considered in this section, via straightforward applications of the
corresponding expressions~\eqref{local_mat},~\eqref{D_ns}
and~\eqref{D_M_deltax} for the operators $\tilde D^{\Delta
  x}_{\mathrm{sing}}$, $\tilde D^{\Delta x}_{\mathrm{reg}}$ and
$D^{\Delta x}_M$, respectively. This completes the proposed
unaccelerated iterative solver for equation~\eqref{eq_per_doublelayer}.

The computational cost required by the various components of this
solver can be estimated as follows.
\begin{enumerate}
\item The application of the \textit{local} operator $\tilde D^{\Delta
    x}_{\mathrm{sing}}$ requires $\mathcal{O}(N)$ arithmetic
  operations, the vast majority of which are those associated with
  evaluation of the multipolar Green function $G_j$.
\item $D^{\Delta x}_M$, in turn, requires $\mathcal{O}(N)$
  operations, including the computation of a number $\mathcal{O}(N)$ of
  values of exponential functions.
\item\label{G_jq_eval} The operator $\tilde D^{\Delta
    x}_{\mathrm{reg}}$, finally, requires $\mathcal{O}(N^2)$ arithmetic
  operations, including the significant cost associated with the
  evaluation of $\mathcal{O}(N^2)$ values of the
  shifted-quasi-periodic Green function $\tilde G^{\textit{qper}}_j$.
\end{enumerate}
Clearly, the cost mentioned in point~\ref{G_jq_eval} above represents
the most significant component of the cost associated of the
evaluation of $D^{\Delta x}$. Thus, although highly accurate, the
direct $\mathcal{O}(N^2)$-cost strategy outlined above for the
evaluation of $D^{\Delta x }$ can pose a significant computational
burden for problems which, as a result of high-frequency and/or
complex geometries, require use of large numbers $N$ of unknowns. A
strategy is presented in the next section which, on the basis of
equivalent sources and Fast Fourier transforms leads to significant
reductions in the cost of the evaluation of this operator, and,
therefore, in the overall cost of the solution method.

\section{Shifted Equivalent-Source Acceleration}\label{accel}
The most significant portion of the computational cost associated with
the strategy described in the previous section concerns the evaluation
of the discrete operator $\tilde D_\mathrm{reg}^{\Delta x}$ in
equation~\eqref{D_ns_inf}.  The present section introduces an acceleration
method for the evaluation of that operator which, incorporating an
FFT-based algorithm that is applicable throughout the spectrum,
reduces very significantly the number of necessary evaluations of the
periodic Green function $\tilde G_j^{\textit{qper}}$, with corresponding reductions
in the cost of the overall approach. A degree of
familiarity with the acceleration methodology introduced
in~\cite{BrunoKunyansky} could be helpful in a first reading of this section.

Central to the contribution~\cite{BrunoKunyansky} is the introduction
of ``monopole and dipole'' representations and an associated notion of
``adjacency'' that, in modified forms, are used in the present
algorithm as well. In order to extend the applicability of the
method~\cite{BrunoKunyansky} to the context of this paper, the present
Section~\ref{accel} introduces certain ``shifted equivalent source''
representations and a corresponding validity-ensuring notion of
``adjacency''. The geometrical structure that underlies the approach
as well an outline of the reminder of Section~\ref{accel} are
presented in Section~\ref{sec_shifted_eqs}.

\subsection{Geometric setup\label{sec_shifted_eqs}}

In order to incorporate equivalent sources, the algorithm utilizes a
``reference periodicity domain'' $\Omega_\textit{per} = [0,d) \times
[h_{\mathrm{min}},h_{\mathrm{max}})$, where $h_{\mathrm{min}}$ and
$h_{\mathrm{max}}$ are selected so as to satisfy $[\min(f),\max(f)]
\subset [h_{\mathrm{min}},h_{\mathrm{max}}]$. The domain
$\Omega_\textit{per}$ is subsequently partitioned in a number
$n_{\mathrm{cell}}=n_x n_y$ of mutually disjoint square cells
$c^q$---whose side $L$, we assume, satisfies
\begin{equation}\label{L_constr}
d=n_xL \qquad \mbox{and} \qquad (h_{\mathrm{max}}-h_{\mathrm{min}})=n_y L
\end{equation}
 for certain positive
integers $n_x$ and $n_y$. We additionally denote by $\Omega_\infty =
(-\infty,+\infty) \times [h_{\mathrm{min}},h_{\mathrm{max}}]$; clearly
$\Omega_\infty$ domain that is similarly partitioned into (an infinite
number of) cells $c^q$ ($q \in \mathbb{Z}$):
\begin{equation}\label{Omegas}
  \Omega_\mathrm{per} = \bigcup_{q=1}^{n_{\mathrm{cell}}} c^q \qquad \mbox{and} \qquad   \Omega_\infty  = \bigcup_{q=-\infty}^\infty c^q = \bigcup_{n=-\infty}^\infty \Big(\Omega_\mathrm{per} +n d \Big).
\end{equation} 
\begin{remark}\label{non-resonant}
  It is additionally assumed that the side $L$ of the accelerator
  cells $c^q$ is selected in such a way that these cells are not
  resonant for the given wavenumber $\mathrm{k}$---that is to say,
  that $-\mathrm{k}^2$ is not a Dirichlet eigenvalue for the Laplace
  operator in the cells $c^q$. This is a requirement in the plane-wave
  Dirichlet-problem solver described in Section~\ref{planewave_rec}.
  Clearly, values of the parameters $L$, $h_{\mathrm{max}}$,
  $h_{\mathrm{min}}$, $n_x$ and $n_y$ meeting this constraint as well
  as~\eqref{L_constr} can be found easily. Finally, the parameter $L$
  is chosen so as to minimize the overall computing cost, while
  meeting a prescribed accuracy tolerance. In all cases considered in
  this chapter values of $L$ in the range between one and four
  wavelengths were used.
\end{remark}
\begin{remark}\label{cell_def}
  In order to avoid cell intersections, throughout this chapter the
  cells $c^q$ are assumed to include the top and right sides, but not
  to include the bottom and left sides. In other words, it is assumed
  that each cell $c^q$ can be expressed in the form $c^q = (a^q_1,
  b^q_1]\times (a^q_2, b^q_2]$ for certain real numbers $a^q_1$,
  $b^q_1$, $a^q_2$ and $b^q_2$.
\end{remark}
\begin{remark}\label{a_cond}
  With reference to Remark~\ref{selec_a}, throughout the reminder of
  this paper (and, more specifically, in connection with the
  accelerated scheme), the parameter $a$ is additionally assumed to
  satisfy the condition $a<L$. Under this assumption, the singular
  integration region (that is, the integration interval
  in~\eqref{dsing_def}) necessarily lies within the union of at most
  three cells $c^q$.
\end{remark}

Taking into account~\eqref{shifted_green_function},
equation~\eqref{D_ns_inf} tells us that the quantity $\tilde
D_{\mathrm{reg}}^{\Delta x}[\mu_1,\dots,\mu_N](x_i)$ equals the field
at the point $x_i$ that arises from free-space ``true'' sources which
are located at points $(x_\ell,f(x_\ell)) - mh e_2$ with
$\ell\in\mathbb{Z}$ and $0\leq m\leq j$, whose $x$-coordinates differ
from $x_i$ in no less than $\gamma$ (see Figure~\ref{POU} and
equation~\eqref{S_def}). Figure~\ref{sources_and_copies}, which
depicts such an array of true sources, displays as black dots
(respectively gray dots) the ``surface true sources''
$(x_\ell,f(x_\ell))$ (resp. the ``shifted true sources''
$(x_\ell,f(x_\ell)) - mh e_2$ with $1\leq m\leq j$).

\begin{figure}[ht!] 
 \centering
 \includegraphics[scale=0.5]{./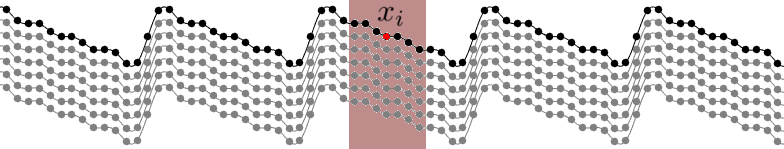}

 \caption{Surface true sources (black), and shifted true sources
   (gray). Matching the color code in Figure~\ref{POU}, the sources
   giving rise to ``local'' interactions for the given target point
   $x_i$ are contained in the region shaded in pink. The accelerated
   algorithm in Section~\ref{conv_ref} below produces $\tilde
   D_\mathrm{reg}^{\Delta x}(x_i)$ (equation~\eqref{D_ns_inf}) by
   subtraction of incorrect local contributions in an FFT-based
   ``all-to-all'' operator, followed by addition of the correct local
   contributions. }
\label{sources_and_copies}
\end{figure}

In order to accelerate the evaluation of the operator $\tilde
D_{\mathrm{reg}}^{\Delta x}$, at first we disregard the shifted true
sources (gray points in Figure~\ref{sources_and_copies}) and we
restrict attention to the surface true-sources (black dots) that are
contained within a given cell $c^q$. In preparation for FFT
acceleration we seek to represent the field generated by the latter
sources in two different ways. As indicated in what follows, the
equivalent sources are to be located in ``Horizontal'' and
``Vertical'' sets $\Lambda_q^H$ and $\Lambda_q^V$ of equispaced
discretization points,
\begin{equation}\label{eq_mesh}
  \Lambda_q^\lambda = \{ \mathbf{y}_s^{\lambda,q} : s=1,\dots,n_\mathrm{eq} \} \qquad (\lambda=H,V),
\end{equation}
contained on (slight extensions of) the horizontal and vertical sides
of $c^q$, respectively; see Figure~\ref{fig_parallel_faces}.  (In the
examples considered in this chapter each one of the extended sets
$\Lambda_q^H$ and $\Lambda_q^V$ contain approximately 20\% more
equivalent-source points than are contained on each pair of parallel
sides of the squares $c^q$ themselves. Such extensions provide slight
accuracy enhancements as discussed in~\cite{BrunoKunyansky}.)  The
resulting equivalent-source approximation, which is described in
detail in Section~\ref{eq_source_I}, is valid and highly accurate
outside the square domain $\mathcal{S}_q$ of side $3L$ and concentric
with $c^q$:
\begin{equation}\label{Sq_notation}
\mathcal{S}_q = \bigcup_{-1 \le m,n \le 1} \Big( c^q + (n,m)L \Big).
\end{equation}

\begin{figure}[ht!]
 \centering
 \includegraphics[width=0.2\textwidth, height=0.2\textwidth]{./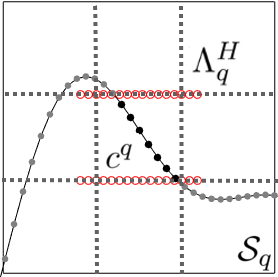} \qquad
 \includegraphics[width=0.2\textwidth, height=0.2\textwidth]{./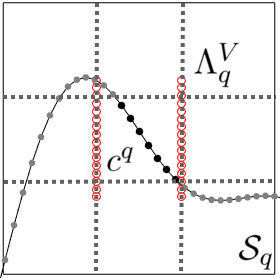} \qquad
 \includegraphics[width=0.2\textwidth, height=0.2\textwidth]{./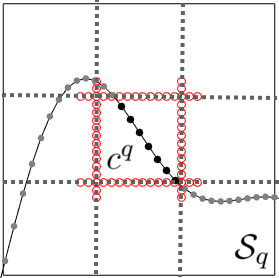}
 \caption{``Free-Space'' Equivalent source geometry. The true sources within $c^q$ (resp. outside $c^q$) are displayed as solid black (resp. gray) circles. The left, center and right images depict in red unfilled circles the horizontal set $\Lambda_q^H$, the vertical set $\Lambda_q^V$, and the union $\Lambda_q^H\cup \Lambda_q^V$, respectively. }
 \label{fig_parallel_faces}
 \end{figure}
\noindent
Importantly, for each $\lambda$ (either $\lambda=H$ or $\lambda=V$) the union of $\Lambda^{\lambda}_q$ for all $q$ (equation~\eqref{grids_def} below) is a Cartesian grid, and thus facilitates evaluation of certain necessary discrete convolutions by means of FFTs, as desired. 

The proposed acceleration procedure is described below, starting with
the computation of the equivalent-source densities
(Section~\ref{eq_source_I}) and following with the incorporation of
shifted equivalent sources and consideration of an associated validity
criterion (Section~\ref{eq_source_II}). This validity criterion
induces a decomposition of the operator $\tilde
D_{\mathrm{reg}}^{\Delta x}$ into two terms
(Section~\ref{reconstruct}), each one of can be produced via certain
FFT-based convolutions (Section~\ref{conv_ref}). A reconstruction of
needed surface fields is then produced (Section~\ref{planewave_rec}),
and, finally, the overall fast high-order solver for
equation~\eqref{eq_per_doublelayer} is presented
(Section~\ref{overall_fast_solv}). For convenience, shifted and
unshifted ``punctured Green functions'' $\Phi_j : \mathbb{R}^2 \times
\mathbb{R}^2 \to \mathbb{C}$ and $\Phi : \mathbb{R}^2 \times
\mathbb{R}^2 \to \mathbb{C}$ are used in what follows which, in terms
of the two-dimensional observation and integration variables
$\mathbf{x}=(x_1,x_2)\in \mathbb{R}^2$ and $\mathbf{y}=(y_1,y_2)\in
\mathbb{R}^2$, are given by
\begin{equation}\label{phi_def}
\Phi_j(\mathbf{x},\mathbf{y}) = \left\lbrace
\begin{array}{cc}
G_j(x_1-y_1, x_2-y_2) & \mbox{ for } \mathbf{x} \ne \mathbf{y} \\
0 & \mbox{ for } \mathbf{x}=\mathbf{y}
\end{array} 
\right.
\quad \mbox{and}\quad \Phi = \Phi_0.
\end{equation}

\subsection{Equivalent-source representation I: surface true sources\label{eq_source_I}}
As indicated above, this section provides an equivalent-source
representation of the contributions to the quantity $\tilde
D_{\mathrm{reg}}^{\Delta x}[\mu_1,\dots,\mu_N](x_i)$
in~\eqref{D_ns_inf} that arise from surface true sources only (the
solid black points in Figure~\ref{sources_and_copies}). To do this we
define
\begin{equation}\label{true_field}
  \psi^{q}(\mathbf{x}) = \sum_{ (x_\ell,f(x_\ell)) \,\in\, c^q } \left( \mu_\ell \; \frac{\partial}{\partial n_y} \Phi(\mathbf{x},\mathbf{y}) \big|_{\mathbf{y} = (x_\ell,f(x_\ell))}\right) (\Delta s)_\ell ,
\end{equation}
which denotes the field generated by all of the surface true-sources located within the cell $c^q$. In the equivalent-source approach, the function $\psi^{q}$ is evaluated, with prescribed accuracy, by a fast procedure based on use of certain ``horizontal'' and ``vertical'' representations, which are valid, within the given accuracy tolerance, for values of $\mathbf{x}$ outside $\mathcal{S}_q$. Each of those representations is given by a sum of monopole and dipole equivalent-sources supported on the corresponding equispaced mesh $\Lambda_q^\lambda$~\eqref{eq_mesh} ($\lambda=H$ or $\lambda=V$). 

To obtain the desired representation a least-squares
problem is solved for each cell $c^q$ (cf.~\cite{BrunoKunyansky}). In detail, for $\lambda=H$ and $\lambda=V$ and for each $q$, an approximate representation of the form
\begin{equation}\label{eq_source_point_x}
  \psi^q(\mathbf{x}) \approx \varphi^{q,\lambda}(\mathbf{x}), \quad\mbox{where}\quad \varphi^{q,\lambda}(\mathbf{x}) = \sum_{s=1}^{n_{eq}} \left(\Phi(\mathbf{x},\mathbf{y}_s^{q,\lambda}) \xi_s^{q,\lambda} + \frac{\partial}{\partial \nu(y)} \Phi(\mathbf{x},\mathbf{y}_s^{q,\lambda}) \zeta_s^{q,\lambda}\right)
\end{equation}
is sought, where $\xi_s^{q,\lambda}$ and $\zeta_s^{q,\lambda}$ are complex numbers (the ``equivalent-source densities''), and where $\nu(y)$ denotes the normal to $\Lambda_q^\lambda$. The densities $\xi_s^{q,\lambda}$ and $\zeta_s^{q,\lambda}$ are obtained as the QR-based solutions~\cite{GolubVanLoan} of the oversampled least-squares problem 
\begin{equation} \label{eq_source_point}
 \min_{(\xi_s^{q,\lambda},\zeta_s^{q,\lambda})} \sum_{t=1}^{n_\mathrm{coll}} \left| \psi^q(\mathbf{x}^q_t) - \sum_{s=1}^{n_{eq}} \left(\Phi(\mathbf{x}^q_t,\mathbf{y}_s^{q,\lambda}) \xi_s^{q,\lambda} + \frac{\partial}{\partial n_y} \Phi(\mathbf{x}^q_t,\mathbf{y}_s^{q,\lambda}) \zeta_s^{q,\lambda} \right)\right|^2,
\end{equation}
where $\{ \mathbf{x}_t^q\}_{t=1,\dots,n_\mathrm{coll}}$ is a sufficiently fine discretization of $\partial\mathcal{S}^{q}$, which in general may be selected arbitrarily, but which we generally take to equal the union of  equispaced discretizations of the sides of $\partial\mathcal{S}^{q}$ (as displayed in Figure~\ref{eqs_validity}). Under these conditions, the equivalent source representation $\varphi^{q,\lambda}$ matches the field values $\psi^q(\mathbf{x})$ for $\mathbf{x}$ on the boundary of $\mathcal{S}^q$ within the prescribed tolerance. Since $\varphi^{q,\lambda}$ and $\psi^q(\mathbf{x})$ are both solutions of the Helmholtz equation with wavenumber $\mathrm k$ outside $\mathcal{S}^q$, it follows that $\varphi^{q,\lambda}$ agrees closely with $\psi^q(\mathbf{x})$ through the exterior of $\mathcal{S}^q$ as well~\cite{BrunoKunyansky}. The equivalent-source approximation and its accuracy outside of $\mathcal{S}_{q}$ is
demonstrated in Figure~\ref{eqs_validity} for the case $\lambda=H$
(``horizontal'' representation).

\begin{figure}[ht!]
 \centering
  \includegraphics[width=0.25\textwidth, height=0.25\textwidth]{./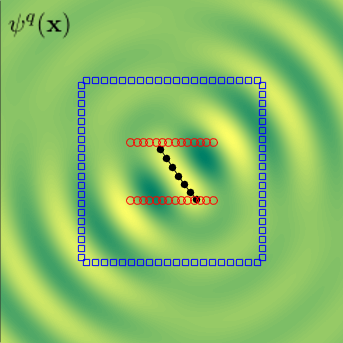} \;
  \includegraphics[width=0.25\textwidth, height=0.25\textwidth]{./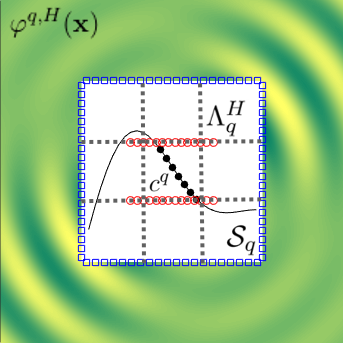} \;
  \includegraphics[width=0.3\textwidth, height=0.25\textwidth]{./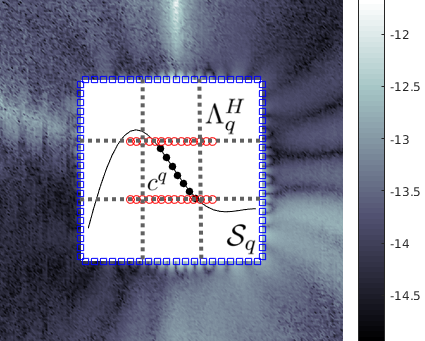}
  
  \caption{Left: Field $\psi^{q}(\mathbf{x})$ generated by the surface
    true sources (solid black circles), evaluated throughout
    space. Center: Approximate field $\varphi^{q,H}(\mathbf{x})$
    generated by the equivalent sources (unfilled red circles),
    evaluated outside $\mathcal{S}^q$.  Right: approximation error
    $|\psi^{q}(\mathbf{x})- \varphi^{q,H}(\mathbf{x})|$ outside
    $\mathcal{S}^q$ (in $\log_{10}$ scale). Collocation points
    $\mathbf{x}_t\in\partial\mathcal{S}^q$ are displayed as blue
    squares. According to the right image, the error for this test
    case ($\mathrm{k}=10$, $L=0.3$, $n_\mathrm{eq}=10$,
    $n_\mathrm{coll}=80$) is smaller than $10^{-12}$ everywhere in the
    validity region
    $\left\{x\in\mathbb{R}^2:x\not\in\mathcal{S}^q\right\}$..}
  \label{eqs_validity}
\end{figure}

\begin{remark}\label{eq_source_quasiper}
  It is easy to see that the equivalent source densities
  $(\xi_s^{q,\lambda},\zeta_s^{q,\lambda})$ are $\alpha$-quasi
  periodic quantities, in the sense that given two cells, $c^{q}$ and
  $c^{q'}$, where $c^{q'}$ is displaced from $c^{q}$, in the
  horizontal direction, by an integer multiple $pd$ of the period $d$,
  we have $(\xi_s^{\lambda,q'},\zeta_s^{\lambda,q'}) =
  e^{\mathrm{i}\alpha pd}(\xi_s^{\lambda,q},\zeta_s^{\lambda,q})$. To
  check this note that, since the corresponding density $\mu_\ell$ is
  itself $\alpha$-quasi periodic (equation~\eqref{quasiper_disc}), in
  view of~\eqref{true_field} it follows that so is the quantity
  $\psi^q(\mathbf{x}^r_t)$ in~\eqref{eq_source_point}. In particular, we have $\psi^{q'}(\mathbf{x}^{q'}_t)= e^{\mathrm{i}\alpha
    pd}\psi^q(\mathbf{x}^q_t)$. Since, additionally,
  $\Phi(\mathbf{x}^{q'}_t,\mathbf{y}_s^{\lambda,q'})=\Phi(\mathbf{x}^q_t,\mathbf{y}_s^{\lambda,q})$,
  we conclude that the least square problems~\eqref{eq_source_point}
  for $q$ and $q'$ are equivalent, and the desired
  $\alpha$-quasiperiodicity of
  $(\xi_s^{q,\lambda},\zeta_s^{q,\lambda})$ follows.
\end{remark}

\subsection{Equivalent-source representation II: shifted true sources\label{eq_source_II}}

In order to incorporate shifted true sources within the equivalent source representation we define the quantity
\begin{equation}\label{true_field_jshifts}
  \psi_j^{q}(\mathbf{x}) = \sum_{ (x_\ell,f(x_\ell)) \,\in\, c^q } \left( \mu_\ell \; \frac{\partial}{\partial n_y} \Phi_j(\mathbf{x},\mathbf{y}) \big|_{\mathbf{y} = (x_\ell,f(x_\ell))} \right) (\Delta s)_\ell
\end{equation}
which, in view of in view of~\eqref{phi_def}, contains some of the
contributions on the right hand side of~\eqref{D_ns_inf}. (With
reference to \eqref{K_def}, note that a term in the sum
\eqref{true_field_jshifts} coincides with a corresponding term
in~\eqref{D_ns_inf} if and only if
$1-S_{\gamma,a}^f(x_i,x_\ell)=1$. For $\mathbf{y} = (x_\ell,f(x_\ell))
\in c^q$, the latter relation certainly holds provided
$\mathbf{x}=(x_i,f(x_i))$ is sufficiently far from $c^q$. But there
are other pairs $(\mathbf{x,y})$ for which this this relation holds;
see Section~\ref{reconstruct} below for details.)

In view of~\eqref{shifted_green_function}, the field $\psi_j^{q}$
in~\eqref{true_field_jshifts} includes contributions from all surface
sources contained within the cell $c^q$ (solid black dots in
Figure~\ref{eqs_validity_shifts}), as well as all of the shifted true
sources that lie below them (which are displayed as gray dots in
Figure~\ref{eqs_validity_shifts}). Importantly, as illustrated in
Figure~\ref{eqs_validity_shifts}, these shifted sources may or may not
lie within $c^q$.

In order to obtain an equivalent-source approximation of the
shifted-true-source quantity $\psi_j^{q}$
in~\eqref{true_field_jshifts} which is analogous to the approximation
\eqref{eq_source_point_x} for the surface true sources, we consider
the easily-checked relation
\begin{equation}\label{psi_jr}
  \psi_j^q(\mathbf{x}) = \sum_{m=0}^{j} (-1)^m C_m^j \psi^{q}(\mathbf{x}-m\bar h),
\end{equation}
where $\bar h= (0,h)$, and we use the
approximation $\psi^q(\mathbf{x}-m\bar h) \approx
\varphi^{q,\lambda}(\mathbf{x}-m\bar h)$ which follows  by
employing~\eqref{eq_source_point_x} at the point $\mathbf{x}-m\bar
h$, for each $m$. Since, in view of the relation
$\Phi(\mathbf{x}+\mathbf{z},\mathbf{y}) =
\Phi(\mathbf{x},\mathbf{y}-\mathbf{z})$, we have
\begin{equation}\label{approx_field_shifted}
\varphi^{q,\lambda}(\mathbf{x}-m\bar h) = \sum_{s=1}^{n_{eq}} \left( \Phi(\mathbf{x},\mathbf{y}_s^{q,\lambda}+m\bar{h}) \xi_s^{q,\lambda} + \frac{\partial}{\partial \nu(y)} \Phi(\mathbf{x},\mathbf{y}_s^{q,\lambda}+m\bar{h}) \zeta_s^{q,\lambda} \right),
\end{equation}
summing~\eqref{approx_field_shifted} over $m$ yields the desired
approximation:
\begin{equation}\label{psi_jr_approx}
  \psi_j^q(\mathbf{x}) \approx   \varphi_j^{q,\lambda}(\mathbf{x}),\quad\mbox{where}\quad  \varphi_j^{q,\lambda}(\mathbf{x}) = \sum_{s=1}^{n_{eq}} \left( \Phi_j(\mathbf{x},\mathbf{y}_s^{q,\lambda}) \xi_s^{q,\lambda} + \frac{\partial}{\partial \nu(y)} \Phi_j(\mathbf{x},\mathbf{y}_s^{q,\lambda}) \zeta_s^{q,\lambda}\right) .
\end{equation}

The shifted-equivalent-source approximation~\eqref{psi_jr_approx} is a
central element of the proposed acceleration approach. Noting that,
for each $m$, the approximation~\eqref{approx_field_shifted} is valid
for points $\mathbf{x}$ outside a translated domain $\mathcal{S}^q -
m\bar h$, it follows that, calling 
\begin{equation}\label{S_jq_hat}
  \widehat{\mathcal{S}}_j^q = \bigcup_{m=0}^{j} \left( \mathcal{S}^q - m \bar{h} \right),
\end{equation}
the overall approximation~\eqref{psi_jr_approx} is valid
for all $\mathbf{x}\not\in\widehat{\mathcal{S}}_j^r$. Thus, letting
\begin{equation}\label{S_jq}
  \mathcal{S}_j^q = \bigcup_{\{ r \, : \, c^r \cap  \widehat{\mathcal{S}}_j^q \ne \emptyset \}} c^r,
\end{equation}
(which equals the smallest union of cells $c^r$ that contains
$\widehat{\mathcal{S}}_j^q$), it follows, in particular,
that~\eqref{psi_jr_approx} is a valid approximation for all
$\mathbf{x}\not\in \mathcal{S}_j^q$.

\begin{figure}[ht!]
 \centering
  \includegraphics[width=0.2\textwidth, height=0.35\textwidth]{./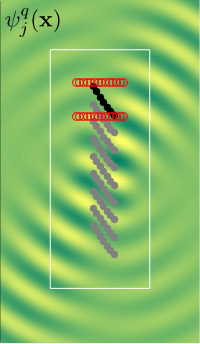} \;
  \includegraphics[width=0.2\textwidth, height=0.35\textwidth]{./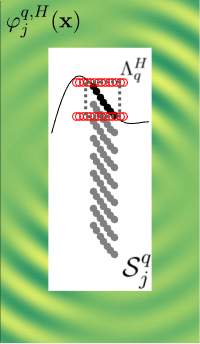} \;
  \includegraphics[width=0.28\textwidth, height=0.35\textwidth]{./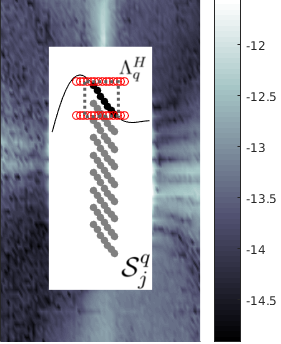}
  
  \caption{Left: Field $\psi_j^q(\mathbf{x})$ generated by the
    combination of the true sources (solid black circles) and their
    shifted copies (solid gray circles).  Center: Approximate field
    $\varphi_j^{q,H}(\mathbf{x})$ generated by shifted equivalent
    sources (unfilled red circles).  Right: approximation error
    $|\psi_j^{q}(\mathbf{x})- \varphi_j^{q,H}(\mathbf{x})|$ outside
    $\mathcal{S}_j^q$ (in $\log_{10}$ scale).  According to the right
    image, the error for this test case ($\mathrm{k}=10$, $L=0.3$,
    $n_\mathrm{eq}=10$, $n_\mathrm{coll}=80$) is smaller than
    $10^{-12}$ everywhere in the validity region
    $\left\{x\in\mathbb{R}^2:x\not\in\mathcal{S}_j^q\right\}$.}
  \label{eqs_validity_shifts}
\end{figure}

\subsection{Decomposition of $\tilde D_{\mathrm{reg}}^{\Delta x}$ in
  ``intersecting'' and ``non-intersecting''
  contributions\label{reconstruct}}

This section introduces a decomposition of the operator $\tilde
D_{\mathrm{reg}}^{\Delta x}$ as a sum of two terms. With reference to
Remark~\ref{cell_def}, and denoting by $\overline A$ the closure of a
set $A$ (the union of the set and its boundary), we define the first
term as
\begin{equation}\label{shifted_true_source}
  \psi_j^{ni,q}:\overline{c^q}\to\mathbb{C}, \qquad \psi^{ni,q}_j(\mathbf{x}) = \sum_{\{r\in\mathbb{Z} \, : \, c^q \cap \mathcal{S}_j^r = \emptyset \}}  \psi^{r}_j(\mathbf{x}),\qquad (\mathbf{x}\in \overline{c^q}).
\end{equation}
Clearly, for $\mathbf{x}\in \overline{c^q}$, the quantity
$\psi_j^{ni,q}(\mathbf{x})$ contains the ``non-intersecting''
contributions---that is, contributions arising from sources contained
in cells $c^r \subset \Omega_\infty$, $-\infty\leq r\leq\infty$
(cf. equation~\eqref{Omegas}), such that $\mathcal{S}_j^r$ does not
intersect $c^q$, and for which therefore, according to
Section~\ref{eq_source_II}, the approximation~\eqref{psi_jr_approx} is
valid. Since the operator $\tilde D_{\mathrm{reg}}^{\Delta x}$ acts on
spaces of functions defined on $\Gamma$, in what follows we will often
evaluate $\psi_j^{ni,q}$ at points $\mathbf{x}\in \overline{c^q}$ of
the form $\mathbf{x} = (x_i,f(x_i))\in \overline{c^q}$.  For
$\mathbf{x} = (x_i,f(x_i))\in \overline{c^q}$, then, the second term
equals, naturally,
\begin{equation}\label{final_correction}
 \psi^{\textit{int},q}_j\Big( x_i,f(x_i) \Big) =  \tilde D_{\mathrm{reg}}^{\Delta x}[\mu_1,\dots,\mu_N](x_i) - \psi^{ni,q}_j\Big( x_i,f(x_i) \Big),\qquad (x_i,f(x_i))\in
c^q.
\end{equation}

In view of definitions~\eqref{true_field_jshifts}
and~\eqref{shifted_true_source}, the field $\psi_j^{ni,q}(\mathbf{x})$
can alternatively by expressed in the form
\begin{equation}\label{shifted_true_source_sum}
  \psi_j^{ni,q}(\mathbf{x}) = \sum_{ \{ \ell \in \mathbb{Z} \, :\, (x_\ell,f(x_\ell)) \,\in\, c^r \,\footnotesize{\mbox{with}}\; c^q \cap \mathcal{S}_j^r = \emptyset \}  } \left( \mu_\ell \; \frac{\partial}{\partial n_y} \Phi_j(\mathbf{x},\mathbf{y}) \big|_{\mathbf{y} = (x_\ell,f(x_\ell))} \right) (\Delta s)_\ell.
\end{equation} 
Note that all non-intersecting contributions to a point $\mathbf{x} =
(x_i,f(x_i)) \in c^q$ arise from integration points
$(x_\ell,f(x_\ell))$ that are at a distance larger than the side $L$
of $c^q$. Thus, in view of Remark~\ref{a_cond} ($a < L$), we have
$S_{\gamma,a}^f(x_i,x_\ell)=0$ for all non-intersecting
contributions. In view of~\eqref{D_ns_inf}
and~\eqref{shifted_true_source_sum}, then, we obtain
\begin{equation}\label{final_correction_0}
  \psi^{\textit{int},q}_j\Big( x_i,f(x_i) \Big) =  \sum_{ \{ \ell \in \mathbb{Z} \, :\, (x_\ell,f(x_\ell)) \,\in\, c^r \,\small{\mbox{with}}\; c^q \cap \mathcal{S}_j^r \ne \emptyset \}} \mu_\ell (1-S_{\gamma,a}^f(x_i,x_\ell)) \partial_{\nu'} G_j(x_i,x_\ell) (\Delta s)_\ell.
\end{equation}

The following two sections (\ref{conv_ref} and~\ref{planewave_rec})
present an efficient evaluation strategy for the quantities
$\psi^{ni,q}_j(\mathbf{x})$ over $\Gamma\cap c^q$, for
$q=1,\dots,n_\mathrm{cell}$. This strategy relies on use of the approximation
\begin{equation}\label{phi_na_j_conv}
\psi^{ni,q}_j(\mathbf{x}) \approx \varphi^{ni,q,\lambda}_j(\mathbf{x}); \qquad \varphi^{ni,q,\lambda}_j(\mathbf{x}) = \sum_{ \{r\in\mathbb{Z} \, : \, c^q \cap \mathcal{S}_j^r = \emptyset \} }  \; \sum_{s=1}^{n_\mathrm{eq}} \left( \Phi_j(\mathbf{x},\mathbf{y}_s^{r,\lambda}) \xi_s^{r,\lambda} + \frac{\partial}{\partial n_y} \Phi_j(\mathbf{x},\mathbf{y}_s^{r,\lambda}) \zeta_s^{r,\lambda}\right),
\end{equation}
(for $\mathbf{x}\in c^q$) which can be obtained by substituting
equation~\eqref{psi_jr_approx} into~\eqref{shifted_true_source}. As
shown in Section~\ref{conv_ref}, the quantities $
\varphi^{ni,q,\lambda}_j(\mathbf{x})$ in~\eqref{phi_na_j_conv}
($q=1,\dots,n_\mathrm{cell}$) are related to a single discrete
Cartesian convolution that can be evaluated rapidly by means of the
FFT algorithm. Once $\psi^{ni,q}(\mathbf{x})$ has been evaluated (by
means of $ \varphi^{ni,q,\lambda}_j$), the remaining ``local''
contributions $\psi^{\textit{int},q}_j$ to $\tilde
D_{\mathrm{reg}}^{\Delta x}$ can be incorporated
using~\eqref{final_correction_0} at a small computational cost. The
overall fast high-order numerical algorithm for evaluation of the
operator on the left-hand side of equation~\eqref{eq_per_doublelayer}
(which also incorporates the implementations of the operators $\tilde
D^{\Delta x}_{\mathrm{sing}}$ and $D_M^{\Delta x}$ presented in
Section~\ref{Overall}) together with the associated fast iterative
solver, are then summarized in Section~\ref{overall_fast_solv}.

\subsection{Approximation of $\psi^{ni,q}_j$ via global and local convolutions at FFT speeds\label{conv_ref}}

In order to accelerate the evaluation of $\psi^{ni,q}_j$ by means of the
FFT algorithm we introduce the quantity
\begin{equation}\label{large_convolution}
 \varphi^{\textit{all},\lambda}_j(\mathbf{x}) = \sum_{r \in \mathbb{Z}}  \sum_{s=1}^{n_\mathrm{eq}} \left( \Phi_j(\mathbf{x},\mathbf{y}_s^{r,\lambda}) \xi_s^{r,\lambda} + \frac{\partial}{\partial n_y} \Phi_j(\mathbf{x},\mathbf{y}_s^{r,\lambda}) \zeta_s^{r,\lambda} \right)
\end{equation}
which incorporates the non-intersecting terms already included
in~\eqref{phi_na_j_conv} as well as undesired ``intersecting'' (local)
terms. For each $q$, the sum of all undesired intersecting terms for
the domain $\overline{c^q}$ is a function
$\varphi^{\textit{int},q,\lambda}_j: \overline{c^q} \to \mathbb{C}$
given by
\begin{equation}\label{small_convolution}
 \varphi^{\textit{int},q,\lambda}_j(\mathbf{x}) = \sum_{ \substack{r\in \mathcal{L}(q)\\ 1\le s \le n_\mathrm{eq}} } \left( \Phi_j(\mathbf{x},\mathbf{y}_s^{r,\lambda}) \xi_s^{r,\lambda} + \frac{\partial}{\partial n_y} \Phi_j(\mathbf{x},\mathbf{y}_s^{r,\lambda}) \zeta_s^{r,\lambda}\right), \quad\mbox{where}\; \mathcal{L}(q) = \{ r\in\mathbb{Z} : c^q \subset \mathcal{S}_j^r\}.
\end{equation}
Since, by construction, $c^q \cap \mathcal{S}^r_j\ne \emptyset$ if and
only if $c^q \subset \mathcal{S}^r_j$, in view of~\eqref{phi_na_j_conv}
we clearly have
\begin{equation}\label{psi_na_sum}
   \varphi^{ni,q,\lambda}_j = \varphi^{\textit{all},\lambda}_j - \varphi^{\textit{int},q,\lambda}_j.
\end{equation}
This relation reduces the evaluation of $\varphi^{ni,q,\lambda}_j$ to
evaluation of the $q$-independent quantity~\eqref{large_convolution}
and the $q$-dependent quantity~\eqref{small_convolution}.

The expression~\eqref{large_convolution} for
$\varphi^{\textit{all},\lambda}_j$ requires the evaluation of an
infinite sum. Exploiting the fact that, as indicated in
Remark~\ref{eq_source_quasiper}, the equivalent sources
$(\xi_s^{r,\lambda},\zeta_s^{r,\lambda})$ are $\alpha$-quasi-periodic
quantities, a more convenient expression can be obtained. Indeed,
defining
\begin{equation}\label{phi_jq_def}
\widetilde \Phi_j^\textit{qper} : \mathbb{R}^2 \times
\mathbb{R}^2 \to \mathbb{C},\qquad \widetilde \Phi_j^\textit{qper}(\mathbf{x},\mathbf{y}) = \left\lbrace
\begin{array}{cc}
\tilde G_j^\textit{qper}(x_1-y_1, x_2-y_2) & \mbox{ for } \mathbf{x} \ne \mathbf{y} \\
0 & \mbox{ for } \mathbf{x}=\mathbf{y}
\end{array} 
\right.
\end{equation}
in terms of the variables $\mathbf{x}=(x_1,x_2)\in \mathbb{R}^2$ and
$\mathbf{y}=(y_1,y_2)\in \mathbb{R}^2$, we can express
$\varphi^{\textit{all},\lambda}_j$ as the sum
\begin{equation}\label{large_convolution_per}
 \varphi^{\textit{all},\lambda}_j(\mathbf{x}) = \sum_{r=1}^{n_\mathrm{cell}} \sum_{s=1}^{n_\mathrm{eq}} \left( \widetilde \Phi_j^{\textit{qper}}(\mathbf{x},\mathbf{y}_s^{r,\lambda}) \xi_s^{r,\lambda} + \frac{\partial}{\partial n_y} \widetilde \Phi_j^{\textit{qper}}(\mathbf{x},\mathbf{y}_s^{r,\lambda}) \zeta_s^{r,\lambda}\right)
\end{equation}
of finitely many terms, each one of which contains $\widetilde \Phi_j^{\textit{qper}}$.

In order to evaluate the quantities $\varphi^{\textit{all},\lambda}_j$
and $\varphi^{\textit{int},q,\lambda}_j$ by means of the FFT algorithm
we use the equivalent-source meshes $\Lambda_q^\lambda$ introduced in
Section~\ref{sec_shifted_eqs} (and depicted in
Figure~\ref{fig_parallel_faces}) and we define, for $\lambda=H,V$, the
``global'' and ``local'' Cartesian grids
\begin{equation}\label{grids_def}
 \Pi_\lambda^{\textit{per}} = \bigcup_{ \{ r\in\mathbb{Z} \; : \; c^r \subseteq \Omega_{per} \}} \Lambda^\lambda_r \qquad \mbox{and} \qquad \Pi_\lambda^q =  \bigcup_{ \{ r\in\mathbb{Z} \; : \; c^q \subset \mathcal{S}_j^r \}} \Lambda^\lambda_r.
\end{equation}
The following two sections describe algorithms which rapidly evaluate
these quantities by means of FFTs. The evaluation of $\psi^{ni,q}_j$
(which is the main goal of Section~\ref{conv_ref}) then follows
directly, as indicated in Section~\ref{sec_bnd_cq}.

\subsubsection{Evaluation of $\varphi^{\textit{all},\lambda}_j$ in $\Pi_\lambda$ via a global convolution \label{sec_large_conv}}

In order to express $\varphi^{\textit{all},\lambda}_j$ as a
convolution, for $\mathbf{y'}\in\Pi_\lambda^{\textit{per}}$ and
$\lambda = H, V$ we define the sums
\begin{equation}\label{sum_eqsources}
 \xi^{\textit{all},\lambda}(\mathbf{y'}) =  \sum_{\substack{1\le r \le n_\mathrm{cell} \\  1\le s \le n_\mathrm{eq} \\ \mathbf{y}_s^{r,\lambda} = \mathbf{y'} } } \xi_s^{r,\lambda}
 \qquad \mbox{ and } \qquad \zeta^{\textit{all},\lambda}(\mathbf{y'}) = \sum_{\substack{1\le r \le n_\mathrm{cell} \\  1\le s \le n_\mathrm{eq} \\ \mathbf{y}_s^{r,\lambda} = \mathbf{y'}}} \zeta_s^{r,\lambda}
\end{equation}
of equivalent source densities $\xi_s^{r,\lambda}$ and
$\zeta_s^{r,\lambda}$, respectively ($1\le r \le n_\mathrm{cell}$),
that are supported at a given point $\mathbf{y'}\in
\Pi_\lambda^{\textit{per}}$. We note that two and even four
contributions may arise at a point $\mathbf{y'}\in
\Pi_\lambda^{\textit{per}}$---as $\mathbf{y'}$ may lie on a common
side of two neighboring cells, and, in some cases, on the intersection
of four different sets $\Lambda_q^\lambda$---on account of overlap of
the extended regions described in Section~\ref{sec_shifted_eqs} and
depicted in Figure~\ref{fig_parallel_faces}.

Replacing~\eqref{sum_eqsources} in~\eqref{large_convolution_per}, we
arrive at the discrete-convolution expression
\begin{equation}\label{large_convolution_per_grid}
 \varphi^{\textit{all},\lambda}_j(\mathbf{x}) = \sum_{\mathbf{y'} \, \in \Pi_\lambda^{\textit{per}}} \left( \widetilde \Phi_j^{\textit{qper}}(\mathbf{x},\mathbf{y'}) \xi^{\textit{all},\lambda}(\mathbf{y'}) + \frac{\partial}{\partial n_y} \widetilde \Phi_j^{\textit{qper}}(\mathbf{x},\mathbf{y'}) \zeta^{\textit{all},\lambda}(\mathbf{y'})\right),\quad \mathbf{x}\in \Pi_\lambda^{\textit{per}},
\end{equation}
for the quantity $\varphi^{\textit{all},\lambda}_j$ on the mesh
$\Pi_\lambda^{\textit{per}}$. The evaluation of this convolution can
be performed by a standard FFT-based procedure in $O(M \log M)$
operations, where $M = O(n_{\mathrm{cell}}n_{eq})$ denotes the number
of elements in $\Pi_\lambda^{\textit{per}}$. Note that, per
equation~\eqref{phi_jq_def}, this global FFT algorithm requires the
values of the quasi-periodic Green function $\tilde
G^{\textit{qper}}_j(X,Y)$ (see also Remark~\ref{large_FFT_eval} below)
at points $(X,Y)$ in the ``evaluation grid''
$\widehat{\Pi}_\lambda^{\textit{per}} = \lbrace \mathbf{x}-\mathbf{y}
: \mathbf{x},\mathbf{y} \in \Pi_\lambda^{\textit{per}} \rbrace$. In
fact, this is the only point in the accelerated algorithm that
requires use of the quasi-periodic Green function.

\begin{remark}\label{large_FFT_eval}
  An efficient strategy for the evaluation of $\tilde
  G^{\textit{qper}}_j(X,Y)$ at a given point was presented in
  Section~\ref{hybrid_spectral}, which makes use of both spectral and
  spatial representations of this function. Additional performance
  gains are obtained in the present context by exploiting certain
  symmetries in the evaluation grid
  $\widehat{\Pi}_\lambda^{\textit{per}}$. The identity $\tilde
  G^{\textit{qper}}_j(X+d,Y)=e^{i\alpha d}\tilde
  G^{\textit{qper}}_j(X,Y)$ is used to restrict the evaluation of the
  function $\tilde G^{\textit{qper}}_j(X,Y)$ at, say, only positive
  values of $X$; for the $j=0$ case, the identity $\tilde
  G^{\textit{qper}}_0(X,Y)=\tilde G^{\textit{qper}}_0(X,-Y)$ is
  similarly used to restrict evaluation of $\tilde
  G^{\textit{qper}}_0(X,Y)$ to positive values of $Y$. Further, since
  the spectral series~\eqref{shifted_periodic_green_function_spectral}
  is a sum of exponentials which can expressed as products of
  exponentials that depend on $X$ and $Y$ separately, the spectral
  series can be evaluated efficiently by utilizing precomputed values
  of the required single-variable exponentials---with limited
  computing and storage cost. For an efficient implementation of the
  spatial series, finally, asymptotic expansions of the Hankel
  functions as proposed in~\cite{BrunoDelourme} are also used. The
  overall strategy produces the required values of $\tilde
  G^{\textit{qper}}_j$ over the necessary evaluation grid
  $\widehat{\Pi}_\lambda^{\textit{per}}$ in a highly efficient manner.
\end{remark}

\subsubsection{Evaluation of $\varphi^{\textit{int},q,\lambda}_j$ in $\Pi_\lambda^q$ via a local convolution \label{sec_small_conv}}

In order to express $\varphi^{\textit{int},q,\lambda}_j$ (equation~\eqref{small_convolution}) as a convolution, for $\mathbf{y'}\in\Pi_\lambda^q$, we define the sums 
\begin{equation}\label{sum_eqsources_local}
\xi^{q,\lambda}(\mathbf{y'}) = \sum_{ \substack{ r \in \mathcal{L}(q)  \\  1 \le s \le n_\mathrm{eq} \\ \mathbf{y}_s^{r,\lambda} = \, \mathbf{y'} }} \xi_s^{r,\lambda}
 \qquad \mbox{ and } \qquad
 \zeta^{q,\lambda}(\mathbf{y'}) = \sum_{ \substack{ r \in \mathcal{L}(q)  \\  1 \le s \le n_\mathrm{eq} \\ \mathbf{y}_s^{r,\lambda} = \, \mathbf{y'} }} \zeta_s^{r,\lambda}
\end{equation}
of equivalent source densities $\xi_s^{r,\lambda}$ and
$\zeta_s^{r,\lambda}$, respectively, that are supported at the point
$\mathbf{y'}\in \Pi_\lambda^q$, where $r$ lies in the local set of
indexes $\mathcal{L}(q)$ defined in~\eqref{small_convolution}.  Note
that, the set $\mathcal{L}(q)$ contains integers $r$ that may lie
outside the range $1 \le r \le n_\mathrm{cell}$.  In such cases, in
order to avoid use of equivalent source densities that lie outside the
reference periodicity domain $\Omega^\textit{per}$, the
$\alpha$-quasiperiodicity of $\xi_s^{r,\lambda}$ and
$\zeta_s^{r,\lambda}$ (Remark~\ref{eq_source_quasiper}) is utilized to
re-express the sums in~\eqref{sum_eqsources_local} in terms of
equivalent sources $\xi_s^{r,\lambda}$ and $\zeta_s^{r,\lambda}$ for
which $1\le r \le n_\mathrm{cell}$. Additionally note that, as in
Section~\ref{sec_large_conv}, two and even four contributions may
arise in the sums~\eqref{sum_eqsources_local} for a given point
$\mathbf{y'}\in \Pi_\lambda^q$.

Replacing~\eqref{sum_eqsources_local} in~\eqref{small_convolution} yields the discrete-convolution expression
\begin{equation}\label{small_convolution_grid}
 \varphi^{\textit{int},q,\lambda}_j(\mathbf{x}) = \sum_{\mathbf{y'} \, \in \Pi_\lambda^q} \left( \Phi_j(\mathbf{x},\mathbf{y'}) \xi^{q,\lambda}(\mathbf{y'}) + \frac{\partial}{\partial n_y} \Phi_j(\mathbf{x},\mathbf{y'}) \zeta^{q,\lambda}(\mathbf{y'}) \right), \quad \mathbf{x}\in\Pi_\lambda^q,
\end{equation}
which can be evaluated for all $\mathbf{x}\in \Pi_\lambda^q$ by means of an FFT procedure, in $O(M_q \log M_q)$ operations, where $M_q = O(n_{eq})$ denotes the number of elements in $\Pi^q_\lambda$. This time, the Green function $\Phi_j$ has to be evaluated on the ``evaluation grid'' 
\begin{equation}\label{pi_q_hat}
\widehat{\Pi}_\lambda^{q} = \lbrace \mathbf{x}-\mathbf{y}
: \mathbf{x},\mathbf{y} \in \Pi_\lambda^{q} \rbrace.
\end{equation}
Notice that the set $\widehat{\Pi}_\lambda^{q}$ is in fact independent of $q$.
 
\subsubsection{Approximation of $\psi^{ni,q}_j$ on the boundary of $c^q$ \label{sec_bnd_cq}}

Having obtained $\varphi^{\textit{int},q,\lambda}_j(\mathbf{x})$ and
$\varphi^{\textit{all},\lambda}_j(\mathbf{x})$, the desired quantities
$\varphi^{ni,q,\lambda}_j(\mathbf{x})$, for $\lambda=H,V$ follow
from~\eqref{psi_na_sum}. For each $q$ ($1\leq q \le n_\mathrm{cell}$)
the resulting discrete values $\varphi^{ni,q,H}_j$ and
$\varphi^{ni,q,V}_j$ are finally
 used to form the mesh functions
\begin{equation}\label{combine_psi_pna}
\varphi^{ni,q}_j: \overline{c^q}
\cap (\Pi_H^q\cup \Pi_V^q)\to \mathbb{C}, \qquad \varphi^{ni,q}_j(\mathbf{x}) = \varphi^{ni,q,\lambda}_j(\mathbf{x}) \quad \mbox{ for } \quad \mathbf{x}\in \overline{c^q}\cap
\Pi_\lambda^q\quad\mbox{and} \quad \lambda=H,V.
\end{equation}
It is clear, by construction, that $\varphi^{ni,q}_j(\mathbf{x})$ is
an approximation of $\psi^{ni,q}_j(\mathbf{x})$ for each element
$\mathbf{x}$ in the discretization $\overline{c^q} \cap (\Pi_H^q\cup
\Pi_V^q)$ of the boundary of $c^q$. Using these approximate values,
the next section presents a method for the high-order evaluation of
$\psi^{ni,q}_j(\mathbf{x})$ at an arbitrary point within $c^q$, and
thus, in particular, on the portion $\Gamma\cap c^q$ of the scattering
surface $\Gamma$ contained within $c^q$.

\subsection{Plane Wave representation of $\psi^{ni,q}_j$ \label{planewave_rec} within $c^q$}

Since $\psi^{ni,q}_j(\mathbf{x})$ satisfies the Helmholtz equation
within the cell $c^q$, and in view of Remark~\ref{non-resonant}, this
field can be obtained within that cell as the solution of the
Dirichlet problem with values $\psi^{ni,q}_j(\mathbf{x})$ on the cell
boundary. Using the approximate values $\varphi^{ni,q}_j$ of the field
$\psi^{ni,q}_j(\mathbf{x})$ that are produced, on the discrete mesh
$\overline {c^q}\cap(\Pi_H^q\cup\Pi_V^q)$, by the fast algorithm
described in Section~\ref{conv_ref}, approximate values of the
solution $\psi^{ni,q}_j(\mathbf{x})$ of this Dirichlet problem for
$\mathbf{x}\in c^q$ are obtained~\cite{BrunoKunyansky} by means of a
discrete plane wave expansion. Thus, using a number $n_\mathrm{plw}$
of plane waves, the proposed approximation for $\mathbf{x} \in c^q$ is
thus given by the expression
\begin{equation}\label{planewave}
\psi^{ni,q}_j(\mathbf{x}) \approx  \eta^{ni,q}_j(\mathbf{x}) \quad \mbox{where} \quad  \eta^{ni,q}_j(\mathbf{x})= \sum_{s=1}^{n_\mathrm{plw}} { w_{s}.e^{\mathrm{i}\mathrm{k} d_s \cdot \mathbf{x}  }},\quad \mathbf{x}\in c^q,
\end{equation}
where the weights $w_{i}$ are obtained as the QR solution~\cite{GolubVanLoan}
of the least squares problem
\begin{equation}\label{plwave_QR}
  \min_{\{ w_{s} \}} \sum_{\mathbf{x}\in \overline {c^q}\cap(\Pi_H^q\cup\Pi_V^q)} \left| \varphi^{ni,q}_j(\mathbf{x}) - \sum_{s=1}^{n_\mathrm{plw}} { w_{s}.e^{\mathrm{i}\mathrm{k} d_s \cdot \mathbf{x}  }} \right|^2, \quad \mbox{where} \quad d_s = \left( \sin\left(\frac{2 \pi s}{n_\mathrm{plw}}\right), \cos \left(\frac{2 \pi s}{n_\mathrm{plw}}\right) \right).
\end{equation}

This is the last necessary element in the proposed algorithm for fast
approximate evaluation of the operator $\tilde
D_{\mathrm{reg}}^{\Delta x}$. Using the various components introduced
above in the present Section~\ref{accel},
Section~\ref{overall_fast_solv} describes the overall proposed fast
high-order solver. 

\subsection{Overall fast high-order solver for
  equation~\eqref{eq_per_doublelayer}\label{overall_fast_solv}}

The overall solver described in what follows results as a modified
version of the unaccelerated solver presented in
Section~\ref{Overall}: in the present accelerated solver the
evaluation of the operator $\tilde D_{\mathrm{reg}}^{\Delta x}$ is
carried out using the procedure described in
Sections~\ref{sec_shifted_eqs} through~\ref{planewave_rec} instead of
the straightforward $O(N^2)$ approach used in Section~\ref{Overall}.
Algorithms 1 to 3 summarize the overall accelerated solution method.
\begin{algorithm}[H]
  \caption{Main program: solution of equation~\eqref{discrete_eq}}
  \label{alg_all}
\begin{algorithmic}
  \STATE Run Initialization (Algorithm 2) \STATE Run GMRES iterations,
  using the forward-map Algorithm 3, on the linear algebra
  problem~\eqref{discrete_eq}
\end{algorithmic}
\end{algorithm}

\begin{algorithm}
  \caption{Initialization}
  \label{alg_init}
\begin{algorithmic}
  \STATE Obtain QR factors for~\eqref{eq_source_point}
  and~\eqref{plwave_QR} \COMMENT{Only once (they do not depend on
    $q$).}  \STATE Evaluate $G_j^{\textit{qper}}$ on
  $\widehat{\Pi}_\lambda^{\textit{per}}$
  \COMMENT{Remark~\ref{large_FFT_eval}.}  \STATE Evaluate $G_j$ on
  $\widehat{\Pi}_\lambda^{q}$ \COMMENT{Only once
    ($\widehat{\Pi}_\lambda^{q}$ in~\eqref{pi_q_hat} does not depend
    on $q$).}  \STATE Precompute matrices for $\tilde D^{\Delta
    x}_{\mathrm{sing}}$ and $D^{\Delta x}_M$
  \COMMENT{Equations~\eqref{local_mat} and~\eqref{D_M_deltax}.}
\end{algorithmic}
\end{algorithm}

\begin{algorithm}[H]
\caption{Discrete forward map: $[\mu_1,\dots,\mu_n]\to \left(\frac{1}{2}I + D^{\Delta x}\right)[\mu_1,\dots,\mu_n]$ }
\label{alg_iter}
\begin{algorithmic}
  \STATE $\{(\xi_s^{q,\lambda},\zeta_s^{q,\lambda})\} \longleftarrow$
  \texttt{EqSources} \COMMENT{Solve least squares
    problem~\eqref{eq_source_point}.}  \STATE
  $\{(\xi^{\textit{all},\lambda}_{\mathbf{y'}},\zeta^{\textit{all},\lambda}_{\mathbf{y'}})\}
  \longleftarrow$ \texttt{GlobalEqSMerge} \COMMENT{Combine equivalent
    sources~\eqref{sum_eqsources}.}  \STATE
  $\{\varphi^{all,\lambda}_j\} \longleftarrow$ \texttt{GlobalFFT}
  \COMMENT{Evaluate~\eqref{large_convolution_per_grid} via FFT on the grid $\Pi_\lambda^{\textit{per}}$.}  \STATE
  $\{(\xi^{q,\lambda}_{\mathbf{y'}},\zeta^{q,\lambda}_{\mathbf{y'}})\}
  \longleftarrow$ \texttt{LocalEqSMerge} \COMMENT{Combine equivalent
    sources~\eqref{sum_eqsources_local}.}  \STATE
  $\{\varphi^{\textit{int},q,\lambda}_j\} \longleftarrow$
  \texttt{LocalFFT} \COMMENT{Evaluate~\eqref{small_convolution_grid}
    via FFT on the grid $\Pi_\lambda^q$.}  
    \STATE $\{\varphi^{ni,q,\lambda}_j\}
  \longleftarrow$ \texttt{LocalSubtract} \COMMENT{Subtract
    $\varphi^{\textit{int},q,\lambda}_j$ from $\varphi^{all,\lambda}_j$~\eqref{psi_na_sum}.}
  \STATE $\{\varphi^{ni,q}_j\}
  \longleftarrow$ \texttt{Combine-$\lambda$} \COMMENT{Combine $\varphi^{ni,q,H}_j$ and $\varphi^{ni,q,V}_j$ as in~\eqref{combine_psi_pna}.}    
  \STATE $\{w_s^q\} \longleftarrow$ \texttt{PlaneWaveWeights}
  \COMMENT{Solve least square problem~\eqref{plwave_QR}.} \STATE $\psi^{ni,q}_j \dashleftarrow$
  \texttt{NonIntersecting} \COMMENT{Use~\eqref{planewave};
    $\mathbf{x}=(x_i,f(x_i))\in c^q$.}
  \STATE $\psi^{\textit{int},q}_j \longleftarrow$
  \texttt{Intersecting} \COMMENT{Use~\eqref{final_correction_0};
    $(x_i,f(x_i))\in c^q$.} \STATE
  $\tilde D_{\mathrm{reg}}^{\Delta x} \dashleftarrow$
  \texttt{EvalRegular} \COMMENT{Use~\eqref{final_correction},
    $\psi^{ni,q}_j(x_i,f(x_i)) $,
    $\psi^{\textit{int},q}_j(x_i,f(x_i))$.} \STATE $\tilde D_{\mathrm{sing}}^{\Delta x}
  \longleftarrow$ \texttt{EvalSingular} \COMMENT{Use~\eqref{local_mat}; $1\le i \le N$.} \STATE $D^{\Delta
    x}_M \longleftarrow$ \texttt{EvalModes} \COMMENT{Use~\eqref{D_M_deltax};  $1\le i \le N$.}  \STATE $ \left(\frac{1}{2}I + D^{\Delta x}\right)
  \longleftarrow$ \texttt{AddOperators} \COMMENT{Add $\frac{1}{2}I$, $\tilde  D_{\mathrm{sing}}^{\Delta x}$, $\tilde D_{\mathrm{reg}}^{\Delta
      x}$ and $D^{\Delta x}_M$~\eqref{D_deltax}-\eqref{discrete_eq}.}
\end{algorithmic}
\end{algorithm}
\vspace{-0.4cm}
\noindent\small 
Algorithm 3: Routines \texttt{EqSource}, \texttt{GlobalFFT},
etc. perform the tasks described in the corresponding comments on the
right column, resulting on the values indicated by the left-pointing
solid arrows. Dashed arrows indicate that an additional approximation
is used in the assignment.  Whenever the resulting values (on the
left) depend on $q$ and/or $\lambda$, the operations are performed for
$1\leq q\leq n_\mathrm{cell}$ and/or for $\lambda = H,V$,
respectively.  \normalsize \vspace{0.4cm}

The accuracy and efficiency of this algorithm is demonstrated in the
following section.

\begin{remark} Once a solution $\mu$ of the integral
  equation~\eqref{eq_per_doublelayer} has been obtained, a single
  application of a slightly modified version of Algorithm 3 enables
  the evaluation of the scattered field $u^{scat}(\mathbf{x})$
  in~\eqref{eq:int_eq_dirichlet}, and thus the total field
  $u(\mathbf{x}) = u^{scat}(\mathbf{x})+ u^{inc}(\mathbf{x})$, at all
  points $\mathbf{x}=(x,y)$ in a given two-dimensional domain---at a
  very moderate additional computational cost. In brief, the modified
  evaluation procedure only requires that equations
  \eqref{final_correction_0} and~\eqref{planewave}, together with
  their dependencies, be implemented so as to produce the necessary
  scattered field $u^{scat}$ at all points where the fields are desired.
\end{remark}

\section{Numerical results\label{numer}} 

This section presents results of applications of the proposed
algorithm to problems of scattering by perfectly conducting periodic
rough surfaces, at both Wood and non-Wood configurations, with
sinusoidal and composite rough surfaces (including randomly rough
Gaussian surfaces), and through wide ranges of problem
parameters---including grazing incidences and high
period-to-wavelength and/or height-to-period ratios. The presentation
is prefaced by a brief section concerning computational costs. For
brevity, only results for the accelerated method are presented. In all
cases these results compare favorably, in terms of computing times,
accuracy and generality, with those provided by previous
approaches. All computational results presented in this section were
obtained from single-core runs on a 3.4GHz Intel i7-6700 processor
with 4 Gb of memory.

\subsection{Computing costs\label{chap2_comp_cost}} 

The dependence of the computing cost of the algorithm on the size of
the problem is subtle, as it includes costs components from various
code elements (acceleration, integration, Green function evaluations,
etc.), each one of which depends significantly on a variety of
structural parameters---including the shift-parameter $h$, the various
ratios $H/d$, $H/\lambda$, $d/\lambda$ involving the height $H$, the
period $d$, and the wavelength $\lambda$, and the ``roughness'' of the
surface, as quantified by the decay of the associated
spectrum. Roughly speaking, however, the results in the present
section suggest two important asymptotic regimes exist:
(1)~$d/\lambda$ grows as $H/\lambda$ is kept fixed; and, (2)~Both
$d/\lambda$ and $H/\lambda$ are allowed to grow simultaneously.

In case~(1), which arises in the context of studies of scattering by
randomly rough surfaces such as the Gaussian surfaces considered in
Section~\ref{sec_gaussian}, the cost of the algorithm grows
at most linearly with the number of unknowns---regardless of the incidence
angle, and including near grazing incidences. This favorable behavior
stems from the decay experienced by the shifted Green function $G_j$
used in~\eqref{shifted_periodic_green_function} as $d/\lambda$ grows
while keeping a constant height $H/\lambda$
(cf.~\eqref{shifted_green_space}
and~\cite[Sec. 5.4]{BrunoDelourme}). As a result of this decay, the
number $n_\mathrm{per}$ of terms necessary to obtain a prescribed error tolerance in
the summation of~\eqref{shifted_periodic_green_function} decreases as
$d$ grows.  In case~(2), on the other hand, the computational cost is
generally observed to range from $O(N)$ up to $O(N^\frac{3}{2})$, and
it can even reach $O(N^2)$ for extreme geometries.

The cost of the overall algorithm can be affected significantly by the
value selected for the shift-parameter $h$ (or, rather, of the
dimensionless parameter $h/\lambda$). On one hand, this parameter
controls the rate of convergence of the spatial series for the shifted
Green function: smaller values of $h/\lambda$ result in faster
convergence of this series. On the other hand, however, use of very
small values of $h/\lambda$ does give rise to certain ill-conditioning
difficulties (which, for geometric reasons, become more and more
pronounced as the grating-depths increase~\cite{BrunoDelourme}). In
particular, since, for a fixed $h/\lambda$ value, the distance between
the scattering surface and the first shifted source decreases as the
depth of the surface is increased, to avoid ill-conditioned-related
accuracy losses it becomes necessary to use larger and larger values
of $h/\lambda$ as the surface height grows. The selection of such
larger $h/\lambda$ values, in turn, requires use of increasingly
higher number of periods for the summation of the spatial periodic
Green function to maintain accuracy. For the test cases considered in
this paper, values of $h/\lambda$ in the range $\frac{1}{3}\leq
h/\lambda \leq 1$ were generally used.  For even steeper gratings,
larger upper bounds must be utilized in order to maintain a given
accuracy tolerance.

In any case, examination of the numerical results presented in what
follows does indicate that, for highly challenging scattering
configurations of the types that arise in a wide range of
applications, the accelerated solver introduced in this paper provides
significant performance improvements over the previous state of the
art: the proposed solver is often hundreds of times faster and beyond,
and significantly more accurate, than other available approaches. And,
importantly, it is applicable to Wood anomaly configurations, and it
is extensible to the three-dimensional case while maintaining a full
Wood-anomaly capability~\cite{3DGratingsFast}.

\subsection{Convergence}

In order to assess the convergence rate of the proposed algorithm, we
consider the problem of scattering of an incident plane-wave at a
fixed incidence angle $\theta = 45^\circ$ by the composite
surface~\cite{BrunoHaslamJOSA}
$$ f(x) = -\frac 14 \left( \sin(x) + \frac 12 \sin(2x) + \frac 13 \sin(3x) + \frac 14 \sin(4x) \right), \qquad x\in (0,2\pi) $$
depicted in Figure~\ref{fig_conv}, whose peak to trough height $H =
\max(f) - \min(f)$ equals $0.763$, and whose period $d$ equals
$2\pi$. For this test we consider two slightly different wavenumbers,
namely, the non-Wood wavenumber $\mathrm{k}=20$, for which we have
$\frac{H}{\lambda} = 2.43$ and $\frac{d}{\lambda} = 20$, and the Wood
wavenumber $\mathrm{k} = 6 (1-\sin(\theta))^{-1} \approx 20.4852...$
for which the $\frac{H}{\lambda}$ and $\frac{d}{\lambda}$ ratios are
slightly larger. Table~\ref{table_composite_conv} presents results of
convergence studies for these two test configurations, using the
unshifted Green function ($j=0$) for the non-Wood cases, and relying,
for the Wood cases, on the shifted Green function with shift-parameter
values $j=8$ and $h=0.16 \approx \lambda/2$. In both cases the
accelerator parameters $L=\lambda$, $n_\mathrm{eq}=10$ and
$n_\mathrm{plw}=35$ and $n_\mathrm{coll}=200$ were used. This table
displays the calculated values $\varepsilon$ of the energy-balance
error~\eqref{energy_error} as well as the error $\tilde\varepsilon$
defined as the maximum for $n\in U$ of the errors in each one of the
scattering efficiencies $e_n$ (Section~\ref{sec:scattering_prb}). (The
quantities $\tilde\varepsilon$ in Table~\ref{table_composite_conv}
were evaluated by comparison with reference values obtained using
large values of $N$ and $n_\mathrm{per}$.)

\begin{table}[ht!] 
\centering 
\caption{Convergence in a simple composite surface for Wood and non-Wood cases. \label{table_composite_conv}}
\resizebox{0.8\textwidth}{!}{\begin{tabular}{ | c | c | c | c | c | c | c | c | } 
\hline 
\multicolumn{2}{|c|}{}  & \multicolumn{3}{|c|}{ $\mathrm{k}=20$ (non-Wood)}  & \multicolumn{3}{|c|}{ $\mathrm{k}=20.4852...^a$ (Wood Anomaly)}  \\ 
\hline 
 $N$ & $n_\mathrm{per}$ & Total time & $\varepsilon$ & $\tilde\varepsilon$ & Total time & $\varepsilon$ & $\tilde\varepsilon$ \\ 
\hline 
   100     &     50     &   0.09  sec & 5.1e-03     &  1.3e-03     &   0.67  sec & 5.9e-02     &  2.2e-02     \\ 
   150     &     75     &   0.09  sec & 1.0e-05     &  4.2e-05     &   0.84  sec & 9.0e-04     &  2.8e-04     \\ 
   200     &    100     &   0.10  sec & 4.9e-06     &  4.2e-05     &   1.02  sec & 3.4e-05     &  7.0e-05     \\
   300     &    150     &   0.13  sec & 1.2e-06     &  2.3e-06     &   1.39  sec & 2.4e-06     &  9.0e-06     \\
   400     &    200     &   0.16  sec & 4.1e-07     &  1.8e-07     &   1.77  sec & 1.6e-07     &  6.1e-07     \\
   600     &    300     &   0.26  sec & 1.1e-08     &  4.9e-09     &   2.57  sec & 1.3e-07     &  2.6e-07     \\
   800     &    400     &   0.36  sec & 2.2e-11     &  3.1e-10     &   3.40  sec & 6.7e-08     &  4.8e-08     \\
\hline 
\end{tabular} 
}
\smallskip

\footnotesize
$^a$The exact value of the Wood-Anomaly frequency $\mathrm{k} = 6 (1-\sin(45^\circ))^{-1}$ was used.
\end{table} 

Table~\ref{table_composite_conv} demonstrates the high-order
convergence and efficiency enjoyed by the proposed algorithm, even for
Wood configurations for which the classical Green function is not even
defined. Concerning accuracy, we see that a mere doubling of the
number of discretization points and the number of terms used for
summation of the shifted Green function suffices to produce
significant improvements in the solution error. Additionally, an
increase in computing costs by a factor of five (from the first to the
last row in the table) suffices to increase the solution accuracy by
six additional digits. And, concerning efficiency, the table displays
computing times that grow in a slower-than-linear fashion as the
discretizations parameters $N$ and $n_\mathrm{per}$ are increased. (As
indicated above, the accelerator parameter $n_\mathrm{eq}=10$ is kept
fixed: the resulting rather-coarse discretization suffices to produce
all accuracies displayed in Table~\ref{table_composite_conv}.)

\begin{figure}[ht!]
\centering
\includegraphics[scale=0.5]{./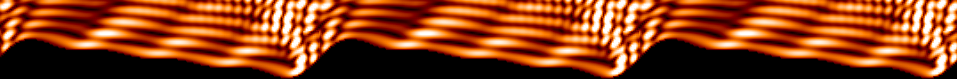}
\caption{Depiction of the solution of the Wood-anomaly problem
  considered in Table~\ref{table_composite_conv}. This solution
  resulted from a 0.9 sec. computation, which included the evaluation
  of the scattered field displayed. \label{fig_conv}}
\end{figure} 

\subsection{Sinusoidal Gratings\label{sinusoidal_gratings}}

In order to illustrate the performance of the proposed solver for a
wide range of problem parameters we consider a Littrow mount
configuration of order $-1$ (the $n=-1$ diffracted mode is
backscattered~\cite{Maystre}), with incidence angle $\theta$ given by
$\sin(\theta)=\frac 13$, for the sinusoidal surface
$$ f(x) = \frac{H}{2} \sin( 2\pi x / d ), \qquad x \in (0,d),$$
and with $H=\frac{d}{4}$ (Tables~\ref{table_away_wood_1}
and~\ref{table_wood_1}), $H=\frac{d}{2}$
(Tables~\ref{table_away_wood_2} and~\ref{table_wood_2}) and $H=d$
(Tables~\ref{table_away_wood_3} and~\ref{table_wood_3}). In the Wood
cases the wavenumber $\mathrm{k}$ varies from the first Wood frequency
($\mathrm{k}=1.5$) up to the sixth one ($\mathrm{k}=9$). As in the
previous section, the accelerator parameters $L=\lambda$,
$n_\mathrm{eq}=10$, $n_\mathrm{plw}=35$ and $n_\mathrm{coll}=200$ were
used in all
cases. 
Tables~\ref{table_away_wood_1},~\ref{table_away_wood_2}
and~\ref{table_away_wood_3} (resp. Tables~\ref{table_wood_1},
\ref{table_wood_2} and~\ref{table_wood_3}) correspond to non-Wood
(resp. Wood) configurations. The first row in each one of these tables
corresponds to test problems considered in~\cite[Tables
3-7]{BrunoDelourme}. 

The columns ``Iter. time'' and ``\# Iters.'' display the computing time required by each full solver iteration and the total number of iterations required to reach the energy balance tolerance $\varepsilon$. The columns ``$G^{\textit{qper}}_0$ eval.''  and ``Init. time'', in turn,  list initialization times as described in Remark~\ref{init_time_desc}.

\begin{table}[ht!] 
  \centering 
  \caption{Sinusoidal scatterer data for increasingly higher non-Wood frequencies; $H=\frac{d}{4}$, $j=0$. }
 \label{table_away_wood_1}   
  \resizebox{0.9\textwidth}{!}{\begin{tabular}{ | c | c | c | c | c | c | c | c | c | c | } 
\hline 
$H/\lambda$ & $d/\lambda$ & $N$ & $n_\mathrm{per}$ & $G^{\textit{qper}}_0$ eval. & Init. time & Iter. time & \# Iters. & Total time & $\varepsilon$  \\   
\hline 
  0.25     &   1.00     &     48     &    110     &   0.01  sec &   0.02  sec & 2.9e-04  sec &      7     &   0.02  sec &  1.7e-08    \\  
  0.62     &   2.50     &     76     &    110     &   0.01  sec &   0.03  sec & 5.8e-04  sec &     10     &   0.04  sec &  3.1e-08    \\  
  1.00     &   4.00     &    120     &    110     &   0.01  sec &   0.04  sec & 1.2e-03  sec &     12     &   0.06  sec &  7.7e-08    \\  
  1.38     &   5.50     &    166     &    110     &   0.01  sec &   0.11  sec & 1.0e-03  sec &     13     &   0.13  sec &  2.1e-08    \\  
  1.75     &   7.00     &    210     &    110     &   0.02  sec &   0.10  sec & 1.1e-03  sec &     14     &   0.12  sec &  2.1e-08    \\  
  2.12     &   8.50     &    256     &    110     &   0.02  sec &   0.08  sec & 2.2e-03  sec &     15     &   0.12  sec &  1.8e-09    \\  
\hline  
\end{tabular} 
}

\end{table}
\begin{remark}\label{init_time_desc}
  In Tables~\ref{table_away_wood_1} and subsequent, the columns
  ``Init. time'' display the total initialization times---that is, the
  times required in each case by Algorithm 2 in
  Section~\ref{overall_fast_solv}. This time includes, in particular,
  the separately-listed ``$G^{\textit{qper}}_0$ eval.'' time, which is
  the time required for the evaluation of all necessary values of the
  quasi-periodic Green function.
\end{remark}

\begin{table}[ht!] 
\centering 
\caption{Sinusoidal scatterer data for increasingly higher non-Wood frequencies; $H=\frac{d}{2}$, $j=0$.}   
\resizebox{0.9\textwidth}{!}{\begin{tabular}{ | c | c | c | c | c | c | c | c | c | c | } 
\hline 
$H/\lambda$ & $d/\lambda$ & $N$ & $n_\mathrm{per}$ & $G^{\textit{qper}}_0$ eval. & Init. time & Iter. time & \# Iters. & Total time & $\varepsilon$  \\   
\hline 
  0.50     &   1.00     &     64     &    120     &   0.01  sec &   0.03  sec & 6.2e-04  sec &      8     &   0.03  sec &  5.9e-08    \\  
  1.25     &   2.50     &    106     &    120     &   0.01  sec &   0.07  sec & 5.6e-04  sec &     13     &   0.07  sec &  6.1e-08    \\  
  2.00     &   4.00     &    168     &    120     &   0.01  sec &   0.08  sec & 1.4e-03  sec &     18     &   0.10  sec &  3.8e-09    \\  
  2.75     &   5.50     &    232     &    120     &   0.01  sec &   0.11  sec & 2.1e-03  sec &     21     &   0.15  sec &  6.3e-09    \\  
  3.50     &   7.00     &    294     &    120     &   0.02  sec &   0.11  sec & 2.5e-03  sec &     23     &   0.17  sec &  1.4e-09    \\  
  4.25     &   8.50     &    358     &    120     &   0.02  sec &   0.14  sec & 3.1e-03  sec &     26     &   0.22  sec &  3.3e-09    \\  
\hline  
\end{tabular} 
}
 \label{table_away_wood_2}   
\end{table}  

\begin{table}[ht!] 
\centering 
\caption{Sinusoidal scatterer data for increasingly higher non-Wood frequencies; $H=d$, $j=0$.}   
\resizebox{0.9\textwidth}{!}{\begin{tabular}{ | c | c | c | c | c | c | c | c | c | c | } 
\hline 
$H/\lambda$ & $d/\lambda$ & $N$ & $n_\mathrm{per}$ & $G^{\textit{qper}}_0$ eval. & Init. time & Iter. time & \# Iters. & Total time & $\varepsilon$  \\   
\hline 
  1.00     &   1.00     &     76     &    150     &   0.01  sec &   0.04  sec & 4.1e-04  sec &     12     &   0.05  sec &  2.2e-08    \\  
  2.50     &   2.50     &    126     &    150     &   0.01  sec &   0.05  sec & 1.2e-03  sec &     18     &   0.08  sec &  2.2e-08    \\  
  4.00     &   4.00     &    200     &    150     &   0.02  sec &   0.07  sec & 2.3e-03  sec &     26     &   0.13  sec &  2.0e-08    \\  
  5.50     &   5.50     &    276     &    150     &   0.02  sec &   0.15  sec & 3.7e-03  sec &     32     &   0.27  sec &  2.7e-09    \\  
  7.00     &   7.00     &    350     &    150     &   0.02  sec &   0.22  sec & 8.1e-03  sec &     39     &   0.54  sec &  5.6e-09    \\  
  8.50     &   8.50     &    426     &    150     &   0.03  sec &   0.41  sec & 9.3e-03  sec &     46     &   0.84  sec &  2.2e-09    \\  
\hline  
\end{tabular} 
}
 \label{table_away_wood_3}   
\end{table}

\begin{table}[ht!] 
\centering 
 \caption{Sinusoidal scatterer data for increasingly higher Wood frequencies. $H=\frac{d}{4}$}   
\resizebox{0.9\textwidth}{!}{\begin{tabular}{ | c | c | c | c | c | c | c | c | c | c | c | } 
\hline 
$H/\lambda$ & $d/\lambda$ & $N$    &   $h/\lambda$   & $n_\mathrm{per}$  & $G^{\textit{qper}}_8$ eval. & Init. time & Iter. time & \# Iters. & Total time & $\varepsilon$  \\   
\hline                                            
  0.38     &   1.50     &     46   &   0.43          &     50     &   0.03  sec &   0.05  sec & 2.1e-04  sec &     10     &   0.05  sec &  4.5e-08    \\  
  0.75     &   3.00     &     90   &   0.43          &     50     &   0.05  sec &   0.09  sec & 4.3e-04  sec &     17     &   0.10  sec &  7.8e-08    \\  
  1.12     &   4.50     &    136   &   0.43          &     50     &   0.09  sec &   0.15  sec & 6.9e-04  sec &     23     &   0.16  sec &  8.3e-08    \\  
  1.50     &   6.00     &    180   &   0.43          &     50     &   0.12  sec &   0.17  sec & 1.1e-03  sec &     30     &   0.20  sec &  9.0e-08    \\  
  1.88     &   7.50     &    226   &   0.48          &     50     &   0.13  sec &   0.21  sec & 1.0e-03  sec &     34     &   0.25  sec &  3.1e-08    \\  
  2.25     &   9.00     &    270   &   0.53          &     50     &   0.21  sec &   0.29  sec & 2.3e-03  sec &     38     &   0.37  sec &  5.9e-08    \\  
\hline  
\end{tabular} 
}
 \label{table_wood_1}   
\end{table}  

\begin{table}[ht!] 
\centering 
\caption{Sinusoidal scatterer data for increasingly higher Wood frequencies. $H=\frac{d}{2}$}   
\resizebox{0.9\textwidth}{!}{\begin{tabular}{ | c | c | c | c | c | c | c | c | c | c | c | } 
\hline 
$H/\lambda$ & $d/\lambda$ &  $N$  & $h/\lambda$     &     $n_\mathrm{per}$ & $G^{\textit{qper}}_8$ eval. & Init. time & Iter. time & \# Iters. & Total time & $\varepsilon$  \\   
\hline                                           
  0.75     &   1.50     &     90  &   0.36          &    200     &   0.09  sec &   0.14  sec & 4.0e-04  sec &     15     &   0.15  sec &  2.5e-08    \\  
  1.50     &   3.00     &    180  &   0.48          &    200     &   0.29  sec &   0.38  sec & 1.2e-03  sec &     23     &   0.41  sec &  7.6e-08    \\  
  2.25     &   4.50     &    270  &   0.69          &    400     &   0.68  sec &   0.86  sec & 1.9e-03  sec &     26     &   0.91  sec &  3.0e-08    \\  
  3.00     &   6.00     &    360  &   0.69          &    400     &   1.13  sec &   1.39  sec & 3.0e-03  sec &     34     &   1.49  sec &  3.3e-08    \\  
  3.75     &   7.50     &    450  &   0.74          &    600     &   1.92  sec &   2.22  sec & 2.6e-03  sec &     40     &   2.33  sec &  3.9e-08    \\  
  4.50     &   9.00     &    540  &   0.77          &    600     &   3.24  sec &   3.59  sec & 5.1e-03  sec &     46     &   3.83  sec &  2.3e-08    \\  
\hline  
\end{tabular} 
}
 \label{table_wood_2}   
\end{table}  

\begin{table}[ht!] 
\centering 
\caption{Sinusoidal scatterer data for increasingly higher Wood frequencies. $H=d$}   
\resizebox{0.9\textwidth}{!}{\begin{tabular}{ | c | c | c | c | c | c | c | c | c | c | c | } 
\hline 
$H/\lambda$ & $d/\lambda$ & $N$    & $h/\lambda$   &         $n_\mathrm{per}$ & $G^{\textit{qper}}_8$ eval. & Init. time & Iter. time & \# Iters. & Total time & $\varepsilon$  \\   
\hline                                           
  1.50     &   1.50     &    200   &   0.36        &    400     &   0.18  sec &   0.50  sec & 7.6e-04  sec &     27     &   0.52  sec &  2.8e-09    \\  
  3.00     &   3.00     &    400   &   0.57        &    650     &   0.83  sec &   1.48  sec & 2.0e-03  sec &     37     &   1.56  sec &  1.7e-08    \\  
  4.50     &   4.50     &    600   &   0.79        &   1000     &   1.87  sec &   3.20  sec & 3.1e-03  sec &     46     &   3.34  sec &  1.5e-08    \\  
  6.00     &   6.00     &    800   &   0.86        &   1500     &   4.86  sec &   6.52  sec & 9.6e-03  sec &     59     &   7.09  sec &  6.4e-08    \\  
  7.50     &   7.50     &   1000   &   0.90        &   2000     &   8.29  sec &  10.67  sec & 9.2e-03  sec &     74     &  11.35  sec &  5.8e-07    \\  
  9.00     &   9.00     &   1200   &   0.86        &   2500     &  17.22  sec &  19.81  sec & 9.8e-03  sec &     88     &  20.68  sec &  3.2e-08    \\  
\hline  
\end{tabular}
} 
 \label{table_wood_3}   
\end{table}  

The non-Wood examples considered in
Tables~\ref{table_away_wood_1},~\ref{table_away_wood_2}
and~\ref{table_away_wood_3} demonstrate the performance of the
proposed accelerated solver in absence of Wood anomalies: these
results extend corresponding data tables presented in the recent
reference~\cite{BrunoDelourme}, with better than single precision
accuracy, to problems that are up to eight times higher in frequency
and depth in comparable sub-second, single-core computing times (cf.
Tables~2,~3 and~4 in~\cite{BrunoDelourme}). High accuracy and speed
are also demonstrated in the Wood-anomaly cases considered in
Tables~\ref{table_wood_1},~\ref{table_wood_2}
and~\ref{table_wood_3}. With exception of the first row in each one of
these tables, for which comparable performance was demonstrated
in~\cite{BrunoDelourme}, none of these problems had been previously
treated in the literature. These tables demonstrate that better than
single precision accuracy is again produced by the proposed methods at
the expense of modest computing costs.

Increases by factors of 2.5 to 25 are observed in the ``Total time''
columns of the Wood-anomaly tables in this section relative to the
corresponding columns in the non-Wood tables, with cost-factor
increases that grow as $\frac{H}{d}$ and/or $\frac{H}{\lambda}$
grow. The cost increases at Wood frequencies, which can be tracked
down directly to the cost required of evaluation of the shifted Green
function, are most marked for deep gratings---which, as discussed in
Section~\ref{chap2_comp_cost}, require use of adequately enlarged
values of the shift parameter $h$ to avoid near singularity and ill
conditioning, and which therefore require use of larger numbers
$n_\mathrm{per}$ of terms for the summation of the shifted
quasi-periodic Green function $\tilde G^{\textit{qper}}_j$.


\subsection{Large random rough surfaces under
  near-grazing incidence\label{sec_gaussian}}
This section demostrates the character of the proposed algorithm in
the context of randomly rough Gaussian surfaces under near-grazing
illumination.  At exactly grazing incidence, $\theta = 90^\circ$, the
zero-th efficiency becomes a Wood anomaly---a challenge which
underlies the significant difficulties classically found in the
solution of {\em near grazing} periodic rough-surface scattering
problems.

Various techniques~\cite{johnson1998grazing,saillard2011} based on
tapering of either the incident field, or the surface, or both, have
been proposed to avoid the nonphysical edge diffraction which arises
as an infinite random surfaces is truncated to a bounded computational
domain. Unfortunately, the modeling errors introduced by this
approximation are strongly dependent on the incidence angle and the
size of the truncated
section~\cite{johnson1998grazing,saillard2011}. Consideration of
periodic surfaces~\cite{chenwest1995} provides an alternative that
does not suffer from this difficulty. However, periodic-surface
approaches have only occasionally been pursued in the context of
random surfaces, on the basis that while~\cite{johnson1998grazing}
``periodic surfaces [allow use of] plane wave incident fields without
angular resolution problems [...] these techniques do not simultaneously
model a full range of ocean length scales for microwave and higher
frequencies''. Thus, the contribution~\cite{johnson1998grazing}
proposes use of a taper---an approach which has been influential in
the subsequent literature~\cite{saillard2011}. As demonstrated in this
section, the proposed periodic-surface solvers can tackle wide ranges
of length-scales, thus eliminating the disadvantages of the periodic
simulation method while maintaining its main strength: direct
simulation of an unbounded randomly rough surface.

The character of the proposed solvers in the random-surface context is
demonstrated by means of a range of challenging numerical
examples. Throughout this section surface ``heights'' are quantified
in terms of the surface's root-mean-square height (rms). For
definiteness, all test cases concern randomly-rough Gaussian
surfaces~\cite[p. 124]{Tsang_vol2} with correlation length equal to
the electromagnetic wavelength $\lambda$; examples for various
period-to-wavelength and height-to-wavelength ratios are used to
demonstrate the computing-time scaling of the algorithm.  Equispaced
meshes of meshsize $\Delta x = \lambda/10$ (Section~\ref{quadrature})
were used for all the examples considered in this section.

\begin{table}[ht!]
\centering
\caption{Gaussian surface with $\theta=89.9^\circ$, $H=\frac{\lambda}{2}$ mean rms. \label{grazing}}
\resizebox{0.8\textwidth}{!}{\begin{tabular}{| c | c | c | c | c | c | c | c |} 
\hline 
$d$/$\lambda$ & $n_\mathrm{per}$ & $G^{\textit{qper}}_8$ eval. & Init. time &  Iter. time & \# Iters. & Total time & $\varepsilon$  \\ 
\hline 
    25     &   1600        &   4.16 sec  &   6.89 sec  &   3.5e-03  sec     &    103     &   7.39 sec & 1.8e-08   \\        
    50     &    800        &   3.76 sec  &   6.66 sec  &   7.1e-03  sec     &    209     &   8.03 sec & 2.5e-07   \\        
   100     &    400        &   3.62 sec  &   8.77 sec  &   1.3e-02  sec     &    360     &  13.81 sec & 3.2e-08   \\        
   200     &    200        &   3.70 sec  &  14.56 sec  &   2.6e-02  sec     &    680     &  33.10 sec & 4.6e-08   \\        
   300     &    133        &   4.06 sec  &  20.13 sec  &   3.8e-02  sec     &    973     &  57.93 sec & 3.2e-08   \\        
   400     &    100        &   4.48 sec  &  26.77 sec  &   5.5e-02  sec     &   1242     &  96.02 sec & 4.6e-08   \\        
\hline 
\end{tabular} 
}
\end{table}

Table~\ref{grazing} presents computing times and accuracies for
problems of scattering by Gaussian surfaces of rms-height equal to
$\lambda/2$ under close-to-grazing incidence $\theta=89.9^\circ$. The
data displayed in this table demonstrates uniform accuracy, with fixed
meshsize, for periods going from twenty-five to four-hundred
wavelengths in size. Certain useful characteristics of the algorithm
may be gleaned from this table. On one hand, the ``time'' columns in
the table show that, as indicated in Section~\ref{sec_intro} and
discussed in Section~\ref{chap2_comp_cost}, the computing costs for a
fixed accuracy grow at most linearly with the surface period
$d/\lambda$. The ``$G^{\textit{qper}}_8$ eval.'' data, in turn, shows
that the cost of evaluation of the shifted Green function
$G^{\textit{qper}}_j$ with $j=8$ remains essentially constant as the
size of the surface grows---and that, therefore, the Green-function
cost becomes negligible, when compared to the total cost, for
sufficiently large surfaces. The $\varepsilon$ error column
demonstrates the high accuracy of the method.

\begin{remark} The ``constant-cost'' observed for the computation of
  $G^{\textit{qper}}_8$ in Table~\ref{grazing} can be understood as
  follows. As noted in section~\ref{hybrid_spectral}, the efficiency
  of the spectral series is inversely proportional to parameter
  $\frac{\delta}{d}$, where $\delta$ is the distance from $Y$ to the
  set of polar points $\{ -mh, 0\le m \le j \}$. As the period $d$
  grows the quotient $\frac{\delta}{d}$ decreases, and, therefore, the
  trade-off in the hybrid strategy increasingly favors the use of the
  spatial series---which as demonstrated by the $n_\mathrm{per}$
  column in Table~\ref{grazing}, requires smaller and smaller values
  of $n_\mathrm{per}$ as the period is increased, to meet a given
  error tolerance.
\end{remark}

Figure~\ref{fig:gauss} displays scattered fields produced by
increasingly {\em larger and steeper} Gaussian surfaces under
$89^\circ$ near-grazing incidence. The $\varepsilon$ error is in all
cases of the order of $10^{-9}$, and the computing times reported in
the figure caption include the computation of the displayed near
field.

\begin{figure}[ht!]
\includegraphics[width=500pt, height=30pt]{./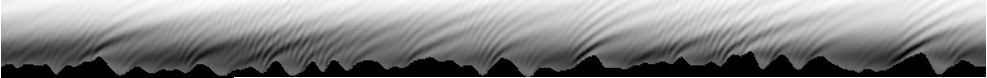}
 
\includegraphics[width=500pt, height=30pt]{./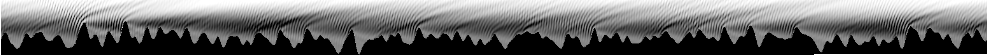}
 
\includegraphics[width=500pt, height=30pt]{./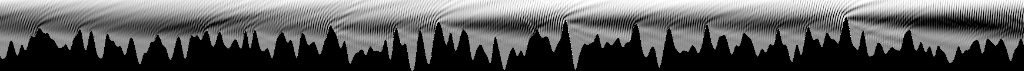}

\caption{Gaussian rough surfaces under $\theta=89^\circ$ incidence,
  with simulation errors $\varepsilon < 10^{-8}$ in all cases. Top:
  $d=100 \lambda$, $H=\frac{\lambda}{2}$ mean rms ($2.6\lambda$
  peak-to-trough). Center: $d=200 \lambda$, $H=\lambda$ mean rms
  ($6.7\lambda$ peak-to-trough).  Bottom: $d=1000 \lambda$ (fragment),
  $H=2\lambda$ mean rms ($14.3\lambda$ peak-to-trough). Computing time
  (including near field evaluation) is 22.3 sec., 62.9 sec. and 830
  sec. respectively.\label{fig:gauss}}
\end{figure}

\subsection{Comparison with~\cite{BrunoHaslamJOSA} for some ``extreme'' problems\label{mike}}

A number of fast and accurate solutions were provided
in~\cite{BrunoHaslamJOSA} for highly-challenging grating-scattering
problems (in configurations away from Wood Anomalies); relevant
performance comparisons with results in that contribution are
presented in what follows. While the results of~\cite{BrunoHaslamJOSA}
ensure accuracies of the order of ten to twelve digits, the solver
introduced in the present paper was restricted, for definiteness, to
accuracies of the order of single-precision. Fortunately, however,
Table~8 in~\cite{BrunoHaslamJOSA} presents a convergence study for a
problem of scattering by a composite surface. That table shows that
the method~\cite{BrunoHaslamJOSA} requires 85 seconds to reach single
precision accuracy for this problem; the present approach, in
contrast, reaches the same precision for the same problem in just 1.8
seconds---including the evaluation of the near-field displayed in
Figure~\ref{large_composite}.
\begin{remark}\label{times_mike} 
  Higher accuracies can be produced by the present approach at
  moderate additional computational expense. In turn, results in
  Table~8 in~\cite{BrunoHaslamJOSA} show that, for example, a
  reduction in accuracy from fourteen digits to single precision only
  produces a relatively small reduction in computing time---from 98
  seconds to 85 seconds. This is a consequence, of course, of the
  high-order convergence of the method~\cite{BrunoHaslamJOSA}.
\end{remark}
\begin{figure}[ht!]
\centering
\includegraphics[scale=0.5]{./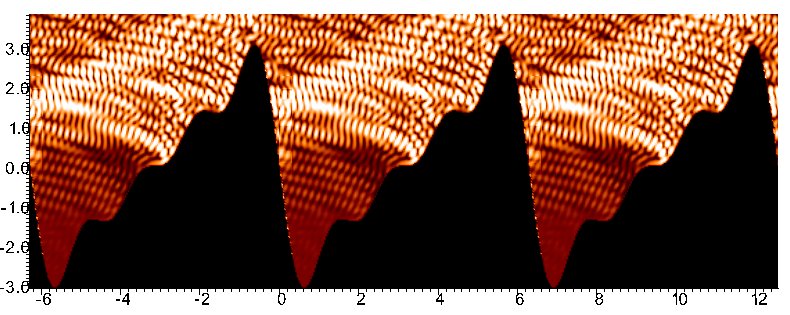} 
\caption{Depiction of the solution of the considered in Table 8
  of~\cite{BrunoHaslamJOSA}. This solution resulted from a 1.8
  sec. computation, which included the evaluation of the scattered
  field displayed.}
\label{large_composite}
\end{figure}

As an additional example we consider Table 5
in~\cite{BrunoHaslamJOSA}. That table presents results for extremely
deep sinusoidal gratings with $\lambda=0.05$ and incidence angle
$\theta=70^\circ$. The corresponding accuracies and computing times
produced for those configurations by the present solvers are presented
in Table~\ref{deep_gratings}. Comparison of the tabulated data shows
significant improvements in computing times, by factors of 12 to 25,
at the expense of a few digits of accuracy; see
Remark~\ref{times_mike}.

\begin{table}[ht!] 
\centering 
\caption{Increasingly deep gratings with a fixed period, and with incidence angle $\theta=70^\circ$.}
\resizebox{0.8\textwidth}{!}{\begin{tabular}{ | c | c | c | c | c | c | c | c | c | } 
\hline 
$h/\lambda$ & $d/\lambda$ & $N$ & $G^{\textit{qper}}_0$ eval. &  Init. time & Iter. time & \# Iters. & Total time & $\varepsilon$  \\ 
\hline 
   160     &     20     &    800     &   0.74  sec &   1.59  sec &   0.08  sec &    633     &   0.84  min &  5.9e-08    \\  
   320     &     20     &   1600     &   1.01  sec &   3.31  sec &   0.15  sec &   1260     &   3.30  min &  5.3e-08    \\  
   480     &     20     &   2400     &   1.28  sec &   4.73  sec &   0.26  sec &   1881     &   8.21  min &  2.6e-08    \\  
   640     &     20     &   3200     &   1.59  sec &   8.89  sec &   0.35  sec &   2507     &  14.88  min &  6.1e-08    \\  
   800     &     20     &   4000     &   1.97  sec &   9.96  sec &   0.43  sec &   3148     &  22.83  min &  8.0e-08    \\  
\hline 
\end{tabular} 
}
\label{deep_gratings}
\end{table} 

Table 7 in~\cite{BrunoHaslamJOSA}, finally, considers increasingly
high frequencies while maintaining the other problem parameters fixed:
$\theta = 45^\circ$, $d=1$, $h=2$.  A similar picture emerges in this
case: the method~\cite{BrunoHaslamJOSA} solves problems with
accuracies of the order of 13 to 16 digits, at computing times that
are larger than those displayed in Table~\ref{high_freq} by factors of
10 to 18.
                                                           
\begin{table}[ht!] 
\centering 
\caption{Increasingly high frequencies, with $\theta = 45^\circ$, $d=1$, $h=2$}
 \resizebox{0.8\textwidth}{!}{\begin{tabular}{ | c | c | c | c | c | c | c | c | c | } 
\hline 
$h/\lambda$ & $d/\lambda$ & $N$ & $G^{\textit{qper}}_0$ eval. &  Init. time & Iter. time & \# Iters. & Total time & $\varepsilon$  \\ 
\hline 
    20     &     10     &    200     &   0.55  sec &   0.75  sec &   0.01  sec &     92     &   1.62  sec &  4.1e-09    \\  
    40     &     20     &    400     &   1.09  sec &   1.51  sec &   0.02  sec &    167     &   5.02  sec &  1.7e-08    \\  
   200     &    100     &   2000     &  11.57  sec &  13.64  sec &   0.25  sec &    477     & 133.67  sec &  3.8e-11    \\  
   400     &    200     &   4000     & 122.78  sec & 128.25  sec &   1.00  sec &    698     & 824.82  sec &  2.4e-09    \\  
\hline 
\end{tabular} 
}
\label{high_freq}
\end{table}

\section{Conclusions\label{concl}}

The periodic-scattering solver introduced in this paper provides the
first accelerated solver of high-order of accuracy for the solution of
problems of scattering by periodic surfaces up to and including Wood
frequencies. The algorithm relies on use of an accelerated shifted
Green function methodology which reduces operator evaluations to Fast
Fourier Transforms, and which, in particular, greatly reduces the
required number of evaluations of the shifted quasi-periodic Green
function. Significant additional acceleration is obtained by the
solver by means of an appropriate application of a dual
spectral/spatial approach for evaluation of the shifted Green
function---which exploits, when possible, the exponentially fast
convergence of the spectral series, and which relies on the
high-order-convergent shifted spatial series for points for which the
convergence of the spectral series deteriorates. The combined solver
is highly efficient: it enables fast and accurate solution of some of
the most challenging two-dimensional periodic scattering problems
arising in practice. A three-dimensional version of this approach has
been found equally effective, and will the subject of a subsequent
contribution.

\section*{Acknowledgment}
OB gratefully acknowledges support by NSF and AFOSR and DARPA through
contracts DMS-1411876 and FA9550-15-1-0043 and HR00111720035, and the
NSSEFF Vannevar Bush Fellowship under contract number
N00014-16-1-2808. MM work was supported from a PhD fellowship of
CONICET and the Bec.AR-Fullbright Argentine Presidential Fellowship in
Science and Technology.

\appendix
\section{Appendix: Convergence and error analysis\label{app_chap_2}}
An error analysis for the numerical method embodied in
equation~\eqref{discrete_eq} follows from the standard stability
result~\cite[Th.10.12]{Kress}. The following lemma establishes the
crucial new element necessary to produce a convergence estimate
specific to equation~\eqref{discrete_eq}, namely, an error estimate
for the combined smooth windowing and trapezoidal quadrature for the
operator $\tilde D_{\mathrm{reg}}$ (all other needed estimates can be
found in reference~\cite{Kress}). Throughout this section the
notations in Section~\ref{quadrature} are used together with the
shorthand $\vec{\mu} = [\mu(x_1),\dots,\mu(x_N)]$ for a given
quasi-periodic function $\mu$.

\begin{lemma}\label{lemma_app_1}
  Let $d>0$ and $\alpha \geq 0$, and let $\mu$ denote an infinitely
  differentiable $\alpha$-quasi-periodic function of quasi-period
  $[0,d]$. Then, $\tilde D_{\mathrm{reg}}^{A,\Delta x}[\vec{\mu}](x)$
  tends to $\tilde D_{\mathrm{reg}}[\mu](x)$, uniformly for $x\in
  [0,d]$, as $A \to \infty$ and $\Delta x \to 0$. More precisely, we
  have
  \begin{equation}\label{estimate_1}
    | \tilde D_{\mathrm{reg}}[\mu](x) - \tilde D_{\mathrm{reg}}^{A,\Delta x}[\vec{\mu}](x) | \le E_p (\Delta x)^p + C_q A^{-q}\quad (1\leq i\leq N),
  \end{equation}
  for all positive integers $p$, and with $q =
  \left\lfloor{\frac{j+1}{2}}\right\rfloor -\frac{1}{2}$ near Wood
  anomalies, and for all positive integers $q$ away from Wood-anomaly
  frequencies. Here $C_q$ and $E_p$ are constants that do not depend
  on either $A$ or $\Delta x$. We also have the error estimate
\begin{equation}\label{estimate_2}
| \tilde D_{\mathrm{reg}}[\mu](x) - \tilde D_{\mathrm{reg}}^{\Delta x}[\vec{\mu}](x) | \le E_p (\Delta x)^p \quad \mbox{for all } p\in\mathbb{N}.
\end{equation}
\end{lemma}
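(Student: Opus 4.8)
The plan is to split the total error into a \emph{truncation} contribution and a \emph{trapezoidal-quadrature} contribution and to bound each one separately. Writing $\tilde D_{\mathrm{reg}}^{A}[\mu](x)=\int_{-\infty}^{\infty}\partial_{\nu'}G_j(x,x')\,(1-S_{\gamma,a}^f(x,x'))\,\mu(x')\,S_{cA,A}(x-x')\,ds'$ as in~\eqref{na_A} (the integrand having compact support in $x'$), the triangle inequality gives, for a mesh point $x=x_i$,
\begin{equation*}
\bigl|\tilde D_{\mathrm{reg}}[\mu](x_i)-\tilde D_{\mathrm{reg}}^{A,\Delta x}[\vec{\mu}](x_i)\bigr|\ \le\ \underbrace{\bigl|\tilde D_{\mathrm{reg}}[\mu](x_i)-\tilde D_{\mathrm{reg}}^{A}[\mu](x_i)\bigr|}_{\text{(I): truncation}}\ +\ \underbrace{\bigl|\tilde D_{\mathrm{reg}}^{A}[\mu](x_i)-\tilde D_{\mathrm{reg}}^{A,\Delta x}[\vec{\mu}](x_i)\bigr|}_{\text{(II): quadrature}}.
\end{equation*}
For~\eqref{estimate_2} only term (II) is present (at $A=\infty$, i.e.\ with $\tilde D_{\mathrm{reg}}^{\Delta x}$ replacing $\tilde D_{\mathrm{reg}}^{A,\Delta x}$); hence both estimates follow once (I) is bounded as in~\eqref{estimate_1} and (II) is bounded by $E_p(\Delta x)^p$ with $E_p$ \emph{independent of $A$}.

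Term (I) is precisely the error of the smooth truncation of the kernel series: by the alternative form of $\tilde D_{\mathrm{reg}}^A$ recorded just after~\eqref{Gq_limit}, it amounts to replacing $\tilde G_j^{\textit{qper}}$ by its windowed truncation $\tilde G_j^{q,A}$ of~\eqref{Gq_limit}--\eqref{Gq_limit_2}. I would therefore inherit the bound on (I) directly from the convergence analysis of the windowed Green function in~\cite{BrunoDelourme,BSTV1,BSTV2}: superalgebraic decay in $A$ (hence $C_qA^{-q}$ for \emph{every} positive integer $q$) away from Wood anomalies, and the algebraic rate $C_qA^{-q}$ with $q=\lfloor(j+1)/2\rfloor-\tfrac12$ at and near Wood anomalies. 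The latter exponent is exactly the one forced by the spatial decay~\eqref{shifted_green_space} of $G_j$, since the lattice tail discarded by the window behaves like $\sum_{|p|d>cA}|pd|^{-1/2-\lfloor(j+1)/2\rfloor}=\mathcal O\!\bigl(A^{1/2-\lfloor(j+1)/2\rfloor}\bigr)$; integration of this tail against the bounded smooth factor $(1-S_{\gamma,a}^f)\,\mu\,ds'$ over one period preserves the rate.

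For term (II) I would fix $A$ and fold modulo the period. Using the periodicity of $f$, the $\alpha$-quasi-periodicity~\eqref{mu_quasiper}, \eqref{quasiper_disc} of $\mu$, and $a<d$ (Remark~\ref{selec_a}), $\tilde D_{\mathrm{reg}}^{A}[\mu](x_i)$ equals the integral over a single period of a $d$-periodic function built from the windowed kernel $\tilde G_j^{q,A}$ in exactly the way $\tilde D_{\mathrm{reg}}^{\Delta x}$ is built from $\tilde G_j^{\textit{qper}}$ in~\eqref{D_ns}--\eqref{G_jq_s} (same subtraction of windowed nearest-neighbour copies), while~\eqref{DAdx_def} is its $N$-point trapezoidal rule. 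That integrand is $C^\infty$ on the period---the only in-period singularity, at $y=x_i$, being removed by the nearest-neighbour subtraction---so the classical trapezoidal-rule estimate for smooth periodic integrands~\cite{Kress} gives, for every $p\in\mathbb N$, $|\text{(II)}|\le C_p(\Delta x)^p$ times the $L^1$-norm over one period of the $p$-th $y$-derivative of that integrand. Since $\mu$ is smooth and quasi-periodic and $\sqrt{1+(f')^2}$ and the cut-off factor are smooth with bounded derivatives, this $L^1$-norm is controlled by $\|\tilde G_j^{q,A}(x_i,\cdot)\|_{C^p}$ over the period minus a fixed neighbourhood of $y=x_i$.

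The main obstacle is to make this last quantity \emph{uniform in $A$} (and in $x_i$), i.e.\ to prove $\sup_{A\ge A_0,\,x_i}\|\tilde G_j^{q,A}(x_i,\cdot)\|_{C^p(\text{period}\,\setminus\,\{|y-x_i|<\gamma\})}<\infty$. For $j$ large enough that $\tfrac12+\lfloor(j+1)/2\rfloor>1$ this is elementary: $\tilde G_j^{q,A}-\tilde G_j^{\textit{qper}}$ is a lattice sum restricted to $|X+pd|>cA$ whose absolute value, together with that of its derivatives up to order $p$ (the window derivatives being $\mathcal O(A^{-1})$ for $A$ bounded below), is dominated by $\sum_{p\neq0}|pd|^{-1/2-\lfloor(j+1)/2\rfloor}<\infty$. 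For the small remaining values of $j$---relevant only away from Wood anomalies---the absolute bound diverges, and the required $A$-uniform $C^p$ bound on $\tilde G_j^{q,A}$ is instead part of the oscillation / summation-by-parts estimates underlying the windowed-Green-function analysis of~\cite{BrunoDelourme,BSTV1,BSTV2} already used for (I). With $E_p$ thus finite and $A$-independent, estimate~\eqref{estimate_1} follows by adding (I) and (II). Finally~\eqref{estimate_2} follows on letting $A\to\infty$: then $\tilde D_{\mathrm{reg}}^{A}[\mu](x_i)\to\tilde D_{\mathrm{reg}}[\mu](x_i)$ and, by the same windowed-Green-function convergence applied to the discrete sums~\eqref{DAdx_def} and~\eqref{D_ns_inf}, $\tilde D_{\mathrm{reg}}^{A,\Delta x}[\vec{\mu}](x_i)\to\tilde D_{\mathrm{reg}}^{\Delta x}[\vec{\mu}](x_i)$; thus $\bigl|\tilde D_{\mathrm{reg}}[\mu](x_i)-\tilde D_{\mathrm{reg}}^{\Delta x}[\vec{\mu}](x_i)\bigr|$ is the $A\to\infty$ limit of term (II) and is therefore $\le E_p(\Delta x)^p$. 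I expect the $A$-uniform $C^p$ control of $\tilde G_j^{q,A}$ for the small values of $j$ to be the only genuinely new ingredient, every other piece being available in~\cite{Kress} or in the windowed-Green-function references.
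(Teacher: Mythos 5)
Your decomposition is exactly the paper's: the triangle-inequality split into a truncation term (I) and a quadrature term (II), with (I) bounded by importing the windowed-Green-function convergence results of the references (superalgebraic in $A$ away from Wood anomalies, $C_qA^{-q}$ with $q=\lfloor(j+1)/2\rfloor-\tfrac12$ at and near them). That part matches.

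For term (II), however, you take a genuinely different route, and it is precisely there that your argument has a gap which the paper's argument is designed to avoid. You fold the truncated integral modulo the period $d$ and invoke the classical trapezoidal estimate for smooth $d$-periodic integrands, which forces you to control $\|\tilde G_j^{q,A}(x_i,\cdot)\|_{C^p}$ on the punctured period \emph{uniformly in $A$}. Your elementary tail bound handles $j\geq 1$, but for $j=0$ --- exactly the unshifted kernel used away from Wood anomalies throughout the paper --- the lattice tail is not absolutely summable and you defer to ``oscillation / summation-by-parts estimates'' in the references; those references establish convergence of the windowed sums~\eqref{Gq_limit}--\eqref{Gq_limit_2}, not the $A$-uniform $C^p$ bounds on the periodized windowed kernel (minus a neighbourhood of the singularity) that your route requires, so this step would have to be proved, not cited. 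The paper never periodizes the kernel in this step: it regards the integrand of~\eqref{na_A} as a smooth $2A$-periodic function $F^{A,x}(x')$ involving only the free-space kernel $G_j$, whose $x'$-derivatives are bounded uniformly in $A$ and $x$ (the $(1-S_{\gamma,a}^f)$ factor removes the singularity and the window derivatives are $O(A^{-p})$), and then bounds the $N_A$-point trapezoidal error by Fourier aliasing: $|F^{A,x}_n|\leq C_p(A/n)^p$, and since $N_A\sim A/\Delta x$ the factor $A^p$ is exactly cancelled, giving $E_p(\Delta x)^p$ with $E_p$ independent of $A$ for every $j\geq 0$, Wood or not. So your skeleton and your bound for (I) are the paper's, but your quadrature estimate as written is incomplete in the $j=0$ case; either supply the $A$-uniform derivative bounds for $\tilde G_j^{q,A}$ (nontrivial, essentially a differentiated windowed-sum estimate) or switch to the $2A$-periodic aliasing argument, which sidesteps the issue entirely. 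Your limiting argument $A\to\infty$ for~\eqref{estimate_2} is fine and consistent with the paper.
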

\begin{proof}
  Let us consider the triangle-inequality estimate
  \begin{equation}\label{lemma_ineq1}
    | \tilde D_{\mathrm{reg}}[\mu](x) - \tilde D_{\mathrm{reg}}^{A,\Delta x}[\vec{\mu}](x) | \le | \tilde D_{\mathrm{reg}}[\mu](x) - \tilde D_{\mathrm{reg}}^{A}[\mu](x) | + | \tilde D_{\mathrm{reg}}^A[\mu](x) - \tilde D_{\mathrm{reg}}^{A,\Delta x}[\vec{\mu}]| .
  \end{equation}
  The first term on the right hand side of this relation admits the
  bound
\begin{equation}\label{super-alg}
  | \tilde D_{\mathrm{reg}}[\mu](x) - \tilde D_{\mathrm{reg}}^{A}[\mu](x) | \le C_q A^{-q},
\end{equation}
for certain values of $q$, as indicated as follows.  For frequencies
$\mathrm{k}$ away from Wood anomalies, on one hand, the Green function
series converges at a superalgebraic rate as
$A\to\infty$~\cite{BrunoDelourme}, (faster than $A^{-q}$ for any
integer $q$), for all integers $j\geq 0$ (including the ``unshifted''
case $j=0$), and thus so does $\tilde D_{\mathrm{reg}}^{A}[\mu]$.  In
other words, away from Wood anomalies, the bound~\eqref{super-alg}
holds for all positive integers $q$. For frequencies $\mathrm{k}$ up to and
including Wood anomalies, on the other hand,
reference~\cite{BrunoDelourme} shows that for a given integer $j\geq
1$, the Green function series enjoys algebraic convergence, with
errors of the order of $A^{-q})$ with $q= (j-1)/2$ for $j$ even, and
with $q= j/2$ for $j$ odd. It follows that, up to an including Wood
anomalies, for a given $j\geq 1$ the bound~\eqref{super-alg} holds
with $q = \left\lfloor{\frac{j+1}{2}}\right\rfloor -\frac{1}{2}$.

Having obtained the estimate~\eqref{super-alg} for the first term on
the right-hand side of~\eqref{lemma_ineq1} under the various frequency
regimes, we now turn to the second term on that right-hand side. To
estimate this term, we first consider the smooth $2A$-periodic
function $F^{A,x}=F^{A,x}(x')$ (which, as indicated in
Remark~\ref{rem_trapezoidal_Dreg}, coincides with the integrand in
equation~\eqref{na_A}), and we show that the coefficients
\begin{equation}
  \label{four-coeffs}
  F^{A,x}_n = \frac{1}{2A}\int_{-A}^A F^{A,x}(x') e^{-\frac{\pi i}Anx'}dx'
\end{equation}
of the Fourier series
\begin{equation}
  \label{eq:Fourier_x}
F^{A,x}(x')=\sum_{n=-\infty}^\infty F^{A,x}_n e^{\frac{\pi i}Anx'} 
\end{equation}
converge to zero rapidly and uniformly in $A$ and $x$ as
$n\to\infty$. Indeed, using integration by parts $p$ times
in~\eqref{eq:Fourier_x} we see that
\begin{equation}
  \label{eq:fcoef_est}
  |F^{A,x}(x')|\leq C^{A,x}_p \left( \frac{A}{n}\right)^p
\end{equation}
where $C^{A,x}_p$ is an upper bound for the absolute value of the
product of $\pi^p$ and the $p$-th derivative of $F^{A,x}(x')$ with
respect to $x'$. But, considering the expression that defines
$F^{A,x}(x')$, namely, the integrand in~\eqref{na_A}, we see that the
$p$-th order derivative of $F^{A,x}(x')$ with respect to $x'$ is
bounded by a constant which does not depend on $A$ or $x$---since the
same is true of each of the four functions in~\eqref{na_A} whose
products equals $F^{A,x}$. We thus obtain, for each non-negative
integer $p$, the bound
\begin{equation}
  \label{eq:four-coefs_bd}
  |F^{A,x}_n|  < C_p \left( \frac{A}{n}\right)^p,
\end{equation}
where the constant $C_p$ depends on $p$ only.  Since $F^{A,x}(x')$ is
(a periodic extension of) the integrand in~\eqref{na_A}, we see that
$\tilde D_{\mathrm{reg}}^{A}[\mu]$ equals the zero-th order
coefficient of $F^{A,x}(x')$:
\begin{equation}
  \label{eq:zeroth-ord-est}
  \tilde D_{\mathrm{reg}}^{A}[\mu] = F^{A,x}_0.
\end{equation}
The discrete approximation
$D_{\mathrm{reg}}^{A,\Delta x}[\vec{\mu}](x)$ in~\eqref{DAdx_def}, in
turn, utilizes in the periodicity interval $[x-A,x+A]$ a number $N_A$
of discretization points that satisfies the relations
\begin{equation}
  \label{eq:A-dic-pts-numb}
\lfloor A/d\rfloor N \leq  N_A \leq \lceil A/d\rceil N
\end{equation}
where, for a real number $r$, $\lceil r \rceil$ (resp.
$\lfloor r\rfloor$) denotes the smallest integer larger than or
equal to $r$ (resp. the largest integer smaller than or equal to
$r$). For a given period $d$ we clearly have
\begin{equation}
  \label{eq:NA_numb}
  N_A = O\left(\frac{A}{N}\right).
\end{equation}

As is well known (and easily checked), the $N_A$-point discrete
trapezoidal-rule quadrature inherent in equation~\eqref{DAdx_def}
integrates correctly all the non-aliased harmonics in
equation~\eqref{eq:Fourier_x}, and it produces the value one for the
aliased harmonics. We thus obtain
\begin{equation}
  \label{eq:discr-est}
  D_{\mathrm{reg}}^{A,\Delta x}[\vec{\mu}] = \sum_{\ell = -\infty}^\infty F^{A,x}_{\ell N_A}.
\end{equation}
In view of~\eqref{eq:four-coefs_bd},~\eqref{eq:zeroth-ord-est} and
~\eqref{eq:discr-est} it follows that
\begin{equation}
  \label{eq:2nd-est}
  | \tilde D_{\mathrm{reg}}^A[\mu](x) - \tilde
  D_{\mathrm{reg}}^{A,\Delta x}[\vec{\mu}]| = \left |\sum_{\substack{\ell = -\infty \\ \ell \ne 0}}^\infty F^{A,x}_{\ell N_A}\right| \leq C_p \left( \frac {A}{N_A}\right)^p\sum_{\substack{\ell = -\infty \\ \ell \ne 0}}^\infty \ell^{-p}
\end{equation}
which, in view of~\eqref{eq:A-dic-pts-numb} and since
$\Delta x\sim 1/ N$, for $p\geq 2$ shows that
\begin{equation}
 | \tilde D_{\mathrm{reg}}^A[\mu](x) - \tilde D_{\mathrm{reg}}^{A,\Delta x}[\vec{\mu}](x) |  \le E_p (\Delta x)^p
\end{equation}
for some constant~$E_p$, as desired. The proof is now complete.
\end{proof}

\bibliographystyle{plain}
\bibliography{../all_references}

\begin{thebibliography}{10}

\bibitem{arens_desanto2006}
T.~Arens, S.~N. Chandler-Wilde, and J.~A. DeSanto.
\newblock On integral equation and least squares methods for scattering by
  diffraction gratings.
\newblock {\em Communications in Computational Physics}, 1(6):1010--1042, 2006.

\bibitem{ewald_arens2011}
T.~Arens, K.~Sandfort, S.~Schmitt, and A.~Lechleiter.
\newblock Analysing ewald’s method for the evaluation of {G}reen’s
  functions for periodic media.
\newblock {\em IMA Journal of Applied Mathematics}, pages 405--431, 2011.

\bibitem{Barnett2d}
A.~Barnett and L.~Greengard.
\newblock A new integral representation for quasi-periodic scattering problems
  in two dimensions.
\newblock {\em BIT Numerical mathematics}, 51(1):67--90, 2011.

\bibitem{BrunoDelourme}
O.~P. Bruno and B.~Delourme.
\newblock Rapidly convergent quasi-periodic {G}reen function throughout the
  spectrum - including {W}ood anomalies.
\newblock {\em Journal of Computational Physics}, January 2014.

\bibitem{brunoLado}
O.~P. Bruno and A.~G. Fernandez-Lado.
\newblock Rapidly convergent quasi-periodic {G}reen functions for scattering by
  arrays of cylinders---including {W}ood anomalies.
\newblock {\em Proc. R. Soc. A}, 473(2199):20160802, 2017.

\bibitem{BrunoHaslamJOSA}
O.~P. Bruno and M.~Haslam.
\newblock Efficient high-order evaluation of scattering by periodic surfaces:
  deep gratings, high frequencies, and glancing incidences.
\newblock {\em J. Opt. Soc. Am. A}, 26(3):658--668, Mar 2009.

\bibitem{BrunoKunyansky}
O.~P. Bruno and L.~Kunyansky.
\newblock A fast, high-order algorithm for the solution of surface scattering
  problems: Basic implementation, tests, and applications.
\newblock {\em Journal of Computational Physics}, 169:80--110, 2001.

\bibitem{3DGratingsFast}
O.~P. Bruno and M.~Maas.
\newblock Fast 3{D} {M}axwell solvers for bi-periodic structures, including
  {W}ood anomalies.
\newblock In preparation, 2018.

\bibitem{MBV0}
O.~P. Bruno and F.~Reitich.
\newblock Solution of a boundary value problem for the {H}elmholtz equation via
  variation of the boundary into the complex domain.
\newblock {\em Proc. Roy. Soc. Edinburgh Sect. A}, 122(3-4):317--340, 1992.

\bibitem{BSTV1}
O.~P. Bruno, S.~Shipman, C.~Turc, and S.Venakides.
\newblock Superalgebraically convergent smoothly windowed lattice sums for
  doubly periodic {G}reen functions in three-dimensional space.
\newblock {\em Proc. R. Soc. A}, 2016.

\bibitem{BSTV2}
O.~P. Bruno, S.~Shipman, C.~Turc, and S.Venakides.
\newblock Three-dimensional quasi-periodic shifted {G}reen function throughout
  the spectrum--including {W}ood anomalies.
\newblock {\em Proc. R. Soc. A}, 473(2207):20170242, 2017.

\bibitem{ewald_capolino2005}
F.~Capolino, D.~R. Wilton, and W.~A. Johnson.
\newblock Efficient computation of the 2-{D} {G}reen's function for 1-{D}
  periodic structures using the {E}wald method.
\newblock {\em IEEE Transactions on Antennas and Propagation},
  53(9):2977--2984, 2005.

\bibitem{chenwest1995}
R.~Chen and J.~C. West.
\newblock Analysis of scattering from rough surfaces at large incidence angles
  using a periodic-surface moment method.
\newblock {\em IEEE transactions on geoscience and remote sensing},
  33(5):1206--1213, 1995.

\bibitem{ColtonKress_InverseAcoustic}
D.~Colton and R.~Kress.
\newblock {\em Inverse acoustic and electromagnetic scattering theory},
  volume~93 of {\em Applied Mathematical Sciences}.
\newblock Springer, second edition, 1997.

\bibitem{desanto98}
J.~DeSanto, G.~Erdmann, W.~Hereman, and M.~Misra.
\newblock Theoretical and computational aspects of scattering from rough
  surfaces: one-dimensional surfaces.
\newblock {\em Waves Random Med.}, 8(4), 1998.

\bibitem{Tsang_vol2}
K.~Ding, L.~Tsang, J.~Kong, C.~Ao, and J.~Kong.
\newblock Scattering of electromagnetic waves, numerical simulation.
\newblock {\em Remote Sensing}, 2001.

\bibitem{GolubVanLoan}
G.~H. Golub and C.~F.~Van Loan.
\newblock {\em Matrix Computations (3rd Ed.)}.
\newblock Johns Hopkins University Press, Baltimore, MD, USA, 1996.

\bibitem{johnson1998grazing}
J.~T. Johnson.
\newblock A numerical study of low-grazing-angle backscatter from ocean-like
  impedance surfaces with the canonical grid method.
\newblock {\em IEEE transactions on antennas and propagation}, 46(1):114--120,
  1998.

\bibitem{Kress}
R.~Kress.
\newblock {\em Linear integral equations}.
\newblock Springer-Verlag, New York, third edition, 2014.

\bibitem{Knuth}
R.~Graham L, D.~E. Knuth, and O.~Patashnik.
\newblock {\em Concrete mathematics: a foundation for computer science}.
\newblock Addison-Wesley Publishing Company, second edition, 1998.

\bibitem{Lebedev}
N.~N. Lebedev.
\newblock {\em Special functions and their applications}.
\newblock Prentice-Hall, New Jersey, 1965.

\bibitem{ewald_linton1998}
C.~M. Linton.
\newblock The {G}reen's function for the two-dimensional {H}elmholtz equation
  in periodic domains.
\newblock {\em Journal of Engineering Mathematics}, 33(4):377--401, 1998.

\bibitem{Barnett3d}
Y.~Liu and A.~Barnett.
\newblock Efficient numerical solution of acoustic scattering from
  doubly-periodic arrays of axisymmetric objects.
\newblock {\em Journal of Computational Physics}, 324:226 -- 245, 2016.

\bibitem{Maystre}
D.~Maystre.
\newblock Rigorous vector theories of diffraction gratings.
\newblock In E.~Wolf, editor, {\em Progress in optics}, volume XXI, chapter~1,
  pages 3--65. Elsevier science publishers, 1983.

\bibitem{maystre2012theory}
D.~Maystre.
\newblock Theory of {W}ood’s anomalies.
\newblock In {\em Plasmonics}, pages 39--83. Springer, 2012.

\bibitem{McPhedran}
R.~C. McPhedran, G.~H. Derrick, and L.~C. Botten.
\newblock Theory of crossed gratings.
\newblock In {\em Electromagnetic Theory of Gratings}, pages 227--276.
  Springer, 1980.

\bibitem{Rayleigh1907}
Lord Rayleigh.
\newblock Note on the remarkable case of diffraction spectra described by
  {P}rof. {W}ood.
\newblock {\em The London, Edinburgh, and Dublin Philosophical Magazine and
  Journal of Science}, 14(79):60--65, 1907.

\bibitem{GMRES}
Y.~Saad and M.~Schultz.
\newblock {GMRES}: A generalized minimal residual algorithm for solving
  nonsymmetric linear systems.
\newblock {\em SIAM Journal on Scientific and Statistical Computing},
  7(3):856--869, 1986.

\bibitem{saillard2011}
M.~Saillard and G.~Soriano.
\newblock Rough surface scattering at low-grazing incidence: A dedicated model.
\newblock {\em Radio Science}, 46(5), 2011.

\bibitem{StewartGallaway}
J.~E. Stewart and W.~S. Gallaway.
\newblock Diffraction anomalies in grating spectrophotometers.
\newblock {\em Applied Optics}, 1(4):421--430, 1962.

\bibitem{Wood1902}
R.~W. Wood.
\newblock On a remarkable case of uneven distribution of light in a diffraction
  grating spectrum.
\newblock {\em The London, Edinburgh, and Dublin Philosophical Magazine and
  Journal of Science}, 4(21):396--402, 1902.

\end{thebibliography}

\end{document}